\documentclass[12pt,fleqn]{article}

\usepackage[margin=1in]{geometry}
\usepackage{amsmath,amssymb,amsthm,enumitem,pgfplots,setspace,subdepth,tikz,titling,verbatim}
\pgfplotsset{compat=1.18}
\usepackage[round]{natbib}
\usepackage{ctable}

\usepackage{ulem}
\usepackage{accents}

\newcommand{\Pred}{\mathrm{Pred}}
\newcommand{\olsi}[1]{\,\overline{\!{#1}}} %
\newcommand{\wtsi}[1]{\,\widetilde{\!{#1}}} %

\newtheorem{observation}{Observation}
\newtheorem{theorem}{Theorem}
\newtheorem{lemma}{Lemma}
\newtheorem{proposition}{Proposition}
\newtheorem{corollary}{Corollary}
\theoremstyle{definition}
\newtheorem{definition}{Definition}
\newtheorem*{definition*}{Definition}
\newtheorem{axiom}{Axiom}
\newtheorem{assumption}{Assumption}

\newtheorem{remarkinner}{Remark}
\newtheorem*{remark*}{Remark}
\newenvironment{remark}[1][]
  {\begin{remarkinner}[#1]}
  {\hfill\textit{End~of~remark}\end{remarkinner}}

\usepackage{color}

\DeclareRobustCommand{\erase}{\bgroup\markoverwith{\textcolor{red}{\rule[.5ex]{2pt}{0.4pt}}}\ULon}
\newcounter{example}

\author{
    Jens Gudmundsson \quad Jens Leth Hougaard \\ 
    Kohmei Makihara \hspace{2.435em} Alexandros Rigos \\[1em]
    Department of Food and Resource Economics, University of Copenhagen
}
\title{Efficient liability assignment under shock propagation%
\thanks{
    We thank Jay Sethuraman and Lars Peter \O sterdal for valuable comments.
    Financial support from the Independent Research Fund Denmark (grant no. 4260-00050B) is gratefully acknowledged.
    }
}
\begin{document}
  
\maketitle

\begin{abstract}
    \noindent

    We study a model in which shocks propagate along a path chosen by agents embedded in a network. When a shock hits an agent, the affected agent cancels one of her outgoing edges. This cancellation cascades sequentially along a chosen path until reaching a terminal agent, resulting in a systemic cost equal to the sum of individual cancellation losses. A liability rule determines agent payments for realized losses, and we seek to implement efficient path selection in the induced sequential-move game. Our main axiomatic result characterizes a family of rules, which set each agent's liability to be proportional to the system's total realized losses with agent weights depending only on the network structure. We propose a way to set such weights based on a simple path-based procedure that assigns equal importance to all non-sink agents along each path and then aggregates these contributions across paths. These weights coincide with the Shapley value of an associated ``path-counting'' cooperative game and can be computed in polynomial time. A simulation study illustrates the mechanics of our approach.

    \medskip
    \noindent
    \textbf{Keywords:} Network externality, liability assignment, efficient implementation, supply-chain disruptions
\end{abstract}

\section{Introduction}

Network disruptions may have severe economic consequences for affected stakeholders \citep[for an overview, see][]{ElliottGolub} with firm-level shocks propagating into adverse macroeconomic outcomes \citep{Carvalhoetal2021}.
Examples range from disaster-induced production losses \citep{InoueTodo2019} and reallocation externalities in project networks \citep{DhingraEtAl} to cascading bankruptcies through trade credit chains \citep{Jacobson}.
A common theme across these settings is that individual node fragility can escalate into systemic network fragility.
The extent of such escalation depends on factors such as network configuration, diversification strategies, agents' risk-mitigation decisions, and the structure of liability agreements \citep[e.g.][]{BimI,BimII,BirgeEtAl}.
As network efficiency is influenced by decentralized decisions \citep{LeeWhang1999}, a central challenge is to align individual incentives with system-wide objectives. %

To address this challenge, we focus on the role of liability assignment in the context of cascading cancellations triggered by a disruptive event. 
Agents are connected in a directed acyclic network of, say, bilateral contracts in a supply chain.
A shock forces the \emph{source} node to cancel one of its contracts (e.g., a delivery);
the agent affected by this cancellation must then cancel one of her own;
and so on, until the process terminates at a final \emph{sink} node (say, an end consumer).
Each cancellation leads to an economic loss, which may vary with the agents involved.
We do not impose any form of correlation between losses;
for instance, low upstream losses may well be followed by much higher downstream losses.
All losses are summarized in a \emph{loss function}, which maps edges to losses.
We take an axiomatic approach, designing a systematic procedure---a \emph{liability rule}---to allocate the total path losses among the agents in the network. %
Such a rule is especially valuable as a rule-of-thumb for routine operations:
rather than renegotiating liabilities following each disruption, agents commit upfront to a systematic liability scheme that can be applied immediately regardless of the canceled path and realization of losses.

Formally, each liability rule and loss function induces an extensive-form game of perfect information.
Our first and central axiom, \textit{efficient implementation}, asserts that the equilibrium outcomes of this game should coincide with the set of efficient paths.
Second, \textit{realized-loss dependence} centers on a practical feature: 
only on-path losses, which are inherently easier to verify and contract on than off-path counterfactuals, should matter.
Third, \textit{pairwise collusion-proofness} strengthens individual incentive compatibility by ruling out profitable joint deviations.
No agent's unilateral deviation should make any pair strictly better off, as this would enable side payments that undermine efficient implementation.
Finally, we impose a standard regularity condition, \textit{scale invariance}.

Theorem~\ref{TH:fixedweight} shows that these four axioms jointly characterize the family of so-called \textit{fixed-weight} rules.
Such a rule is parameterized by a pre-set weight vector $w$ and assigns to each agent $i$ the fraction $w_i \geq 0$ of the total loss with $\sum_i w_i = 1$.
Weights may depend on the network structure but not on the underlying loss function.
For instance, weights will be positive for agents with multiple outgoing edges (this will be a consequence of \textit{efficient implementation}).
Fixed-weight rules are \emph{reductionist} \citep{BogoMoulin10}: 
liabilities are assigned in the same way as if off-path losses were such that all paths are efficient, at which point there is no reason to condition liabilities on the chosen path.
Yet more, the way losses are distributed along the chosen path is irrelevant:
liabilities depend solely on the total path loss. 
In practice, it thus suffices to verify total losses along the canceled path and there is no need to distinguish some agent $i$'s individual loss from some other agent $j$'s;
especially if losses for some reason are contentious, where it is not objectively clear which cancellation to associate them to, liabilities work the same and we do not need to take a stance.
These rules are very robust and achieve the desired alignment of incentives even in more general settings.%
\footnote{In Section~\ref{SEC:discussion}, we briefly discuss allowing for incomplete information on losses and extending to more general action sets of the agents such as fractional and multi-contract cancellations.}

In this way, a central insight of our analysis is that losses should be treated as a common, systemic responsibility.
Agents must internalize the full downstream consequences of their decisions.
More local liability assignments are inadequate, e.g., in which the agent pays an amount increasing in the ``direct'' loss they cause to their successor (such as the ones introduced in \citealp{GudmundssonHougaardKo2023} and \citealp{HougaardEtAl2017}).
Indeed, such rules may even generate unbounded inefficiency;
see the example in Figure~\ref{FIG:pigou}.%
\footnote{In Appendix~\ref{APP:additional}, we provide a complementary impossibility result. 
Specifically, there are graphs for which no rule satisfies \textit{efficient implementation}, \textit{realized-loss dependence}, and ``on-path only'' liabilities.}
This echoes the observation by \citet{BimI}, who note that ``contracts whose terms are contingent only on individual outputs may be insufficient for dealing effectively with supply risk at higher tiers''. 
Even introducing solidarity among agents on the canceled path (e.g., sharing total losses equally among all on-path agents) generally fails to ensure efficient cancellations.
Fixed-weight rules instead imply a broader notion of solidarity that extends beyond the realized path to off-path agents as well.

\begin{figure}[!htb]
    \centering
    \begin{tikzpicture}
        \node at (0,0) (s) {$s$};
        \node at (1,.8) (i) {$1$};
        \node at (2,.8) (j) {$\dots{}$};
        \node at (3,.8) (k) {$m$};
        \node at (4,0) (t) {$t$};
        \draw [->] (s) |- (i);
        \draw [->] (i) -- (j);
        \draw [->] (j) -- (k);
        \draw [->] (k) -| (t);
        \draw [thick,->] (s) -- (t);
    \end{tikzpicture}
    \caption{There are two paths from source $s$ to sink~$t$:
        one with many unit-loss edges and one (thick) single-edge path with loss $1+\varepsilon$.
        If the source's liability increases only with the direct loss associated to their choice, then $s$ prefers canceling edge $s \to 1$ over $s \to t$, leading to an inefficient path with total loss $m > 1 + \varepsilon$.}
    \label{FIG:pigou}
\end{figure}

To single out one member of the class of fixed-weight rules, we suggest to set weights reflecting the fraction of source-sink paths that agents are part of.
These weights possess a number of interesting features.
For example, the source is always assigned the highest weight, which is counterbalanced by global solidarity in the sense that all agents (on- and off-path; decision and non-decision makers) have positive weights.
Theorem~\ref{TH:shapley} shows that these weights coincide with the Shapley value of a naturally associated cooperative game in which coalitional worth increases in path count.%
\footnote{The axiomatic foundations for the Shapley value are well understood;
see, for instance, \citet{Shapley1953,Young1985,Neyman1989,vandenBrink2007}.}
Further, the weights can be computed in polynomial time, ensuring that the approach remains tractable in larger instances as well. 

We complement these findings with a simulation study, which highlights some advantages of this efficient rule over the natural status quo (namely, the ``local-liability'' rule in which each agent is liable only for the direct loss their choice causes).
First, due to efficiency, average liabilities are lower and most agents are better off.
Second, as our rule spreads liabilities thin but wide---many agents bear a small share rather than a few bearing the full burden---individual liability variation is significantly lower.
Hence, in a richer framework with risk-averse agents, there are good reasons to believe that our approach may be advantageous to all parties.

\textit{Related literature.}
Our work relates to \citet{GudmundssonHougaardSethuraman25}, who examine a similar setting of propagating network disruptions. 
They restrict to tree graphs in which agents have many incoming edges but only one outgoing, but, more importantly, their focus is different.
Their disruptions follow a pre-specified propagation flow through the network and individual decisions pertain to investments made to avoid disruptions altogether.
Whereas their incentive scheme influences the location and likelihood of disruption, in our work the disruption always occurs at the source and the liability rule instead affects the path selected (and, in turn, the total losses generated).
In this way, our approach bears resemblance also to \citet{GudmundssonHougaardKo2023}, who axiomatically study how to assign liabilities in a ``post-disruption'' setting (covering tree networks in which each node has at most one incoming edge).
The current paper departs from this in that disruptions now have more limited consequences.
Our affected agents need only cancel one outgoing edge, which adds a strategic dimension;
in \citet{GudmundssonHougaardKo2023}, a disruption leads by default to a complete failure of all downstream agreements. 
A related paper is \citet{BakshiKleindorfer2009}, who study a bargaining game with asymmetric information between a retailer and an upstream supplier.
They identify a cost-sharing contract that leads to efficient levels of investments for risk mitigation given certain probabilities of disruption (as represented by a specific parameterized functional form).
This again differs from the approach taken in the current paper, as our model does not feature such investments.

Our paper also relates to the large literature on cost sharing in networks;
see \citet{Hougaard2018} for a recent overview.
Typically, these models feature a central planner who seeks to implement an ``efficient'' network by eliciting information (e.g., on costs or demands) from agents and allocating network costs according to a cost-sharing rule.%
\footnote{In \citet{Moulin14}, the planner elicits connection demands from network users and builds the cheapest network meeting all demands (knowing connection costs).
\citet{AnshelevichEtAl} and \citet{ChenAl} focus on an edge-specific rule that splits costs equally between users.
They examine the induced cost sharing game in which users' strategies consist of paths satisfying their connection demand and study worst-case performance in equilibrium.
\cite{JuarezKumar} consider implementation using cost-sharing rules restricted to depend only on individual path costs and total network cost.}
In comparison, our modeling framework is quite different.
The ``implemented network'' (i.e., the canceled path) is determined by independent agent choices rather than a planner's selection, we do not need to elicit any agent information, and a multiparty contract plays the role of the central planner in adjudicating liabilities.

Having said that, two papers relatively close to ours are \citet{H&T-2015} and \citet{JuarezKoXue}.
In \citet{H&T-2015}, agents have connection demands represented by pairs of nodes on an underlying graph, whereas the planner is ignorant about both connection costs and demands.
Based on user reports, the planner estimates the efficient network and shares the realized costs according to a pre-announced cost-sharing rule.
\citet{H&T-2015} use axioms similar to \textit{realized-loss dependence} and \textit{efficient-path invariance} (see our Proposition~\ref{PR:EPI}) to characterize so-called ``linear simple rules''.
They further examine conditions for which such rules induce truthful reporting in equilibrium.
\citet{JuarezKoXue} consider sequential processes in which agents have individual values at each step (i.e., each edge creates a value for each agent).
A planner selects what paths to implement and how to redistribute the accumulated aggregate value across agents.
They characterize a class of rules (reminiscent of our fixed-weight rules) in a setting in which the planner has complete information. 
The focus of their analysis is on ex-post value sharing for given sequential outcomes, whereas our approach treats the allocation rule as an ex-ante incentive device designed to induce efficient path selection in a decentralized setting.
They also explore an incomplete-information setting, in which the planner only is able to redistribute the realized individual values (agents, on the other hand, still have complete information) and agents vote on which path to implement.
In this case, they primarily single out the ``equal division'' rule.

\textit{Outline.}
In Section~\ref{SEC:model}, we introduce the model, liability rules, and the induced non-cooperative game.
In Section~\ref{SEC:axioms}, we present a number of desirable axioms, leading up to the characterization of the fixed-weight rules.
In Section~\ref{SEC:weights}, we suggest how to set such fixed weights, which turns out to coincide with the Shapley value of an associated cooperative game.
Section~\ref{SEC:simulations} presents the simulation study.
Section~\ref{SEC:discussion} discusses two generalizations of the model.
We conclude in Section~\ref{SEC:concluding}.
Proofs and technical details are postponed to the Appendix.

\section{Model} \label{SEC:model}

In this section, we first introduce the primitives of the model.
There is a fixed network representing agents and their bilateral relations (e.g., joint business projects).
Associated to each project is an economic loss that would be incurred if the project was canceled.
Moreover, cancellations cascade, forming a path through the network.
In the end, losses get reassigned via a \emph{liability rule}, which systematically allocates total cancellation losses to agents.
Such a rule induces a non-cooperative game in which agents make strategic cancellation decisions with the objective of minimizing their own liability.
From a systemic perspective, our aim is to identify liability rules for which the set of equilibria of the induced game coincides with the set of efficient cancellation paths.  

\subsection{Primitives} \label{SUB:primitives}

There is a directed graph $(N,E)$ that describes a network of at least three \textbf{agents} $N$ and their relations $E \subseteq N \times N$ (interchangeably, ``nodes'' and ``edges''), which is held fixed throughout.
We interpret the graph broadly to capture collaborative projects or input-output relations in a supply chain, where an edge $ij \in E$ can be interpreted as agent $i$ delivering inputs to agent~$j$.
The graph is connected, acyclic in the directed sense, and has a unique \emph{source} node~$s$ without incoming edges.%
\footnote{It is straightforward to extend to a network with multiple nodes without incoming edges as long as we maintain the assumption that only one is directly affected by the shock; 
compare Section~\ref{SEC:simulations}.
Agents who cannot be affected even indirectly (i.e., nodes not reachable from the shock) are then considered free of liability.}
Let $i \to j$ whenever there is an edge $ij \in E$.
As the graph is acyclic, there is a topological (total) ordering $\leq$ of~$N$ such that $i \to j \implies i < j$ and there is at least one \emph{sink} node without outgoing edges (e.g., representing end consumers).
A \textbf{path} is a sequence of adjacent agents (or, interchangeably, adjacent edges) connecting the source to a sink.
Let~$\mathcal{P}$ be the set of paths.
Moreover, let $N_P \subseteq N$ and $E_P \subseteq E$ denote the nodes and edges, respectively, of path~$P\in {\cal P}$. 
Let $\mathcal{P}^-_i$ be the set of subpaths (``histories'') from the source to agent~$i$ and $\mathcal{P}^+_i$ be the subpaths (``continuations'') from agent~$i$ to a sink.

We impose a mild richness condition, namely that no interior agent can be part of all paths in~$\mathcal{P}$.
An immediate implication is that the source has multiple outgoing edges. 

\begin{assumption}[No bottlenecks] \label{A2}
    For each non-source and non-sink agent~$i$, there is a path $P \in \mathcal{P}$ with $i \not \in N_P$.
\end{assumption}

The situation we consider is that the source~$s$ is hit by an exogenous shock.
In consequence, the source has to select one of its outgoing edges to cancel;
say $s$ cancels $(s \to i_1)$.
This has further repercussions:
now agent~$i_1$ also has to cancel one of its edges, say~$(i_1 \to i_2)$, and so forth.
The cancellations continue to cascade to form a path connecting the source to a sink.
Each edge $e \in E$ has an associated \textbf{loss} $\ell(e) \geq 0$ that is incurred if the edge is canceled.
We think of the loss $\ell(ij)$ as the costs incurred by agent $j$ if agent $i$ fails to deliver its inputs, but other interpretations are also possible.
Let $\ell \colon E \to \mathbb{R}_{\geq 0}$ describe the \textbf{loss function} and collect all such functions in $\mathcal{L} \equiv \{ \ell \colon E \to \mathbb{R}_{\geq 0} \}$.
For a path or subpath~$P$, define $\ell(P) \equiv \sum_{e \in E_P} \ell(e)$ as the total loss of $P$.

\subsection{Efficient paths}

Our primary objective is to incentivize efficient path cancellation.
For this purpose, collect the cheapest, or \textbf{efficient}, paths in
\[ \textstyle
    \mathcal{E}(\ell) \equiv \arg \min_{P\in\mathcal{P}} \ell(P) \subseteq \mathcal{P}.
\]
Denote the cost of the cheapest $i$-to-sink subpath by $L_i \equiv\min_{ P \in\mathcal{P}^+_i} \ell(P) \geq 0$ and the set of such subpaths by
\[ \textstyle
    \mathcal{E}_i(\ell) \equiv \arg \min_{P\in\mathcal{P}^+_i} \ell(P) \subseteq \mathcal{P}^+_i.
\]
The acyclic graph structure ensures that these costs and paths can be computed quickly, for instance using \citeauthor{Dijkstra1959}'s~(\citeyear{Dijkstra1959}) shortest path algorithm. 
    
An immediate observation is that efficient paths are dynamically consistent \citep[e.g.,][]{Bellman1957}.
That is to say, if we decompose an efficient path $P$ into subpaths $Q = (s \to \dots \to i)$ and $R = (i \to \dots \to t)$ around an on-path agent $i \in N_P$, then $Q$ is a cheapest subpath from $s$ to $i$ and $R$ a cheapest subpath from $i$ to a sink~$t$.

\subsection{Liability rules}

In terms of timing, the network structure $(N,E)$ is fixed from the outset.
Agents anticipate that disruptions may occur, but the nature and severity of the shock, and hence the induced loss function, may vary.
Rather than renegotiating liabilities each time a disruption occurs, agents contract in advance on a \emph{liability rule}, which systematically reassigns the incurred losses as a function of the canceled path~$P$ and the loss function~$\ell$.
Such ex-ante contracting could for instance be implemented through blockchain-based smart contracts \citep[compare][]{GudmundssonHougaardKo2023,GudmundssonHougaard2026}.
Such a contract can automate liability assignment and reduce transaction costs, for instance those arising from legal disputes.%
\footnote{The case in which the liability rule also is contingent on the location of the shock is discussed in Section~\ref{SEC:simulations}.}

\begin{definition}
    A liability rule $\phi \colon \mathcal{P} \times \mathcal{L} \to \mathbb{R}^N_{\geq 0}$, a \textbf{rule} for short, assigns liability $\phi_i(P, \ell) \geq 0$ to agent~$i$ if path $P$ is canceled under loss function~$\ell$ such that $\sum_i \phi_i(P,\ell) = \ell(P)$.
\end{definition}

Restricting to non-negative liabilities is a mild solidarity assumption:
no agent should profit from disruptions if others are harmed.
Balance is desirable to eliminate any need to inefficiently ``burn'' resources or to rely on subsidies from external parties.
Once the shock occurs, the corresponding loss function is common knowledge.
Hence, once agents choose edges to cancel, they do so knowing the graph, loss function, and liability rule. 
Although efficient paths thus are computable, they will only be realized if the liability rule correctly aligns individual incentives.
Our objective is to design rules in such a way that the canceled path, jointly selected through independent choices, is efficient no matter the loss function.
To formalize this, we next cast the interaction as a non-cooperative game.

\subsection{The induced game and efficient implementation}

Fix a rule~$\phi$ and loss function $\ell \in \mathcal{L}$. 
The \emph{game} induced by~$\phi$ and parameterized by~$\ell$ is denoted $\mathcal{G}(\ell; \phi)$.
This is an extensive-form game with perfect information. 
Its \emph{players} are the agents~$N$.
The source, agent~$s$, moves first and chooses its successor~$i$ among its neighbors $N^+_s = \{ i \in N \mid s \to i \}$.
Thereafter, $i$ chooses its successor~$j$ from $N^+_i = \{ j \in N \mid i \to j \}$, and so on.
A \emph{strategy} for an agent~$i$ is a function~$\sigma_i \colon \mathcal{P}^-_i \to N^+_i$, which assigns a successor $\sigma_i(P) \in N^+_i$ to each history $P \in \mathcal{P}^-_i$ leading up to~$i$.
Let~$\sigma$ denote the \emph{strategy profile}.
Once we reach a sink, the game ends.
Let $\Pi(\sigma) \in \mathcal{P}$ denote the source-sink path realized under strategy profile~$\sigma$.
Each agent~$j$ chooses strategy~$\sigma_j$ to minimize its liability $\phi_j(\Pi(\sigma_j,\sigma_{-j}), \ell)$ with common knowledge on $\phi$ and~$\ell$.
We take subgame-perfect equilibrium as solution concept \citep[e.g.][]{OsborneRubinstein1994}.
Let $\mathcal{S}(\ell; \phi) \subseteq \mathcal{P}$ denote the set of paths that are subgame-perfect equilibrium outcomes of the game $\mathcal{G}(\ell; \phi)$.

We wish to design the rule such that the set of paths realized under equilibrium play of the induced game coincide with the efficient paths \citep[``exact'' or ``full'' implementation, see e.g.][]{DasguptaHammondMaskin1979}.

\begin{axiom}[Efficient implementation]
    For every loss function $\ell \in \mathcal{L}$,
    \[
        \mathcal{S}(\ell; \phi) = \mathcal{E}(\ell).
    \]
\end{axiom}

There are many rules that satisfy \textit{efficient implementation}.
One example is \emph{equal division}:
share total losses equally among all agents.
Indeed, this is a member of the class of rules that we will characterize and possesses a form of solidarity property in that losses are borne by all agents, not only the ones directly affected by the cancellations.

Having said that, \textit{efficient implementation} can also be satisfied if losses are shared only along the canceled path.
For instance, one can assign the full efficient cost $L_s$ to the source and, for each on-path agent~$i$, assign liability equal to the difference between realized and efficient cost from~$i$ to the sink (that is, agents bear the systemic inefficiency associated to their choices).
Although such a rule achieves efficiency, it relies on counterfactual losses and liabilities are sensitive to losses off the realized path.
In what follows, we will argue against such rules due to practical concerns pertaining to enforceability, compliance, and stability.
That is to say, it may be far easier to contract on realized losses than on counterfactual ones.

\section{Axiomatic foundations} \label{SEC:axioms}

We introduce three additional desirable properties of rules:
\textit{realized-loss dependence}, \textit{pairwise collusion-proofness}, and \textit{scale invariance}. 
Our main axiomatic result, Theorem~\ref{TH:fixedweight}, shows that, together with \textit{efficient implementation}, these properties characterize a class of rules that we term \emph{fixed-weight rules}.

\subsection{Realized-loss dependence}

Recall that we interpret a liability rule as a contractual agreement between the agents. 
For this to be enforceable in practice, ideally the rule should depend only on verifiable outcomes;
that is to say, whereas realized losses are observable ex post and can serve as the basis of enforceable transfers, counterfactual losses are inherently unverifiable and difficult to contract upon.
In this spirit, our next axiom prohibits rules that depend on off-path losses.
Such rules include, for instance, ones that hold agents liable for ``the externalities they impose on others'' by comparing realized losses with some counterfactual losses.%
\footnote{This property resembles the axiom ``Unobserved information independence'' in \cite{H&T-2015}.
Somewhat similar ideas have been considered in \cite{ChenAl} and \cite{JuarezKumar} creating minimum information frameworks for implementation of efficient connection networks.}

We stress that this restriction on liability rules does not require abandoning efficiency.
As we shall see, the axioms are compatible. 
Rather, \textit{realized-loss dependence} plays the role among our axioms of ensuring that efficiency is achieved in a stable and practically feasible manner.
We do not need complicated contracts based on counterfactual losses, it suffices to restrict attention to incentive schemes grounded in realized and verifiable losses.

\begin{axiom}[Realized-loss dependence]
    For each path $P \in \mathcal{P}$ and all loss functions $\ell, \ell' \in \mathcal{L}$,
    \[ 
        \ell(e) = \ell'(e) \text{ for each edge } e \in E_P 
        \, \implies \,
        \phi(P, \ell) = \phi(P, \ell'). 
    \]
\end{axiom}

Besides ensuring a practically viable rule, \textit{realized-loss dependence} also adds a level of robustness to \textit{efficient implementation}.
First, an agent's liability must increase in the total losses of the cheapest continuation following the agent's choice, even off the equilibrium path.
(As no inefficient subpaths are ever realized, liabilities of such paths do not need to be restricted.)
We record this in Proposition~\ref{PR:eff}, where $\oplus$ is used to denote the operation of concatenating subpaths and edges.

\begin{proposition}[Downstream monotonicity] \label{PR:eff}
    Let $\phi$ satisfy \textit{realized-loss dependence} and \textit{efficient implementation}.
    Consider an arbitrary loss function $\ell\in\mathcal{L}$, agent $i \in N$, and subpath $P \in \mathcal{P}^-_i$.
    For all agents $j, k \in N^+_i$ and efficient continuation subpaths $P_j\in \mathcal{E}_j(\ell)$ and $P_k\in \mathcal{E}_k(\ell)$, 
    \[
        \ell(ij) + L_j < \ell(ik) + L_k 
        \, \iff \,
        \phi_{i}(P \oplus ij \oplus P_j,\ell) < \phi_{i}(P \oplus ik \oplus P_k,\ell).
    \]
\end{proposition}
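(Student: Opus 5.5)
The plan is to reduce the statement to a comparison inside one induced game. We replace $\ell$ by an auxiliary loss function $\ell'$ that agrees with $\ell$ on the two candidate paths $Q_j \equiv P \oplus ij \oplus P_j$ and $Q_k \equiv P \oplus ik \oplus P_k$. By \textit{realized-loss dependence}, $\phi(Q_j,\ell')=\phi(Q_j,\ell)$ and $\phi(Q_k,\ell')=\phi(Q_k,\ell)$, so it suffices to prove the equivalence for $\ell'$. Write $a \equiv \ell(ij)+L_j$ and $b \equiv \ell(ik)+L_k$.

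\textbf{Construction.} Fix $M > \max_{e} \ell(e) + \ell(Q_j)+\ell(Q_k)$. Set $\ell'(e)=\ell(e)$ on $E_{Q_j}\cup E_{Q_k}$ and $\ell'(e)=M$ elsewhere, so $\ell'\ge \ell$ pointwise.
\begin{itemize}[nosep]
\item Since $\ell'\ge\ell$ and $P_j$, $P_k$ keep their losses, the cheapest continuation costs from $j$ and $k$ are unchanged: $L'_j=L_j$ and $L'_k=L_k$.
\item By acyclicity, no node of $P_j$ or $P_k$ lies on $P$. Hence the only cheap edges leaving nodes of $P$ are the edges of $P$ together with $ij$ and $ik$.
\item Every path using an $M$-edge costs more than $\ell(Q_j)+\ell(Q_k)$.
\end{itemize}
Therefore $\mathcal{E}(\ell')$ consists of the paths $P\oplus ij\oplus R$ with $R\in\mathcal{E}_j(\ell)$ if $a<b$, symmetrically via $k$ if $b<a$, and both families if $a=b$.

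\textbf{Auxiliary lemma.} Under \textit{efficient implementation} and \textit{realized-loss dependence}, in every subgame-perfect equilibrium of $\mathcal{G}(\ell';\phi)$ the continuation from any reached node $m$ is an efficient continuation, i.e.\ lies in $\mathcal{E}_m(\ell')$. The proof is by contradiction. Suppose some SPE has an inefficient continuation $C$ from $m$ after some history. Build a further loss function $\hat\ell$ that equals $\ell'$ on $C$ and on a chosen efficient continuation $C^*$ from $m$, makes the history to $m$ free and unavoidable, and raises all other edges to $M$. The equilibrium continuation strategies still form an SPE of the subgame under $\hat\ell$; here we use \textit{realized-loss dependence} to transport the payoffs of the relevant paths, which is where the care lies. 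Then $\mathcal{S}(\hat\ell;\phi)$ contains an inefficient path, contradicting \textit{efficient implementation}.

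\textbf{Main step.} Suppose $a<b$. By \textit{efficient implementation} some SPE realizes $Q_j$. By the lemma, if $i$ deviates to $k$ the continuation is an efficient continuation from $k$. Choosing $P_k$ to be that continuation (ties among efficient continuations are handled by the lemma applied with $\ell'$ adjusted), SPE optimality of $i$ gives $\phi_i(Q_j,\ell')\le\phi_i(Q_k,\ell')$. To get strict inequality, suppose equality held. Then we could build an SPE in which $i$ picks $k$, with $Q_k$ realized: use the backward-induction strategies below $i$, and let agents on $P$ follow $P$, which is trivially optimal because all deviations there are to $M$-edges. This requires checking that those deviations are indeed not profitable; if needed, strengthen the construction by letting $M$ vary per edge and applying \textit{realized-loss dependence} again. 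This SPE would realize the inefficient path $Q_k\notin\mathcal{E}(\ell')$, a contradiction.

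The reverse direction follows by symmetry: if $a=b$, both $Q_j$ and $Q_k$ are SPE outcomes, which forces equal liabilities for $i$. Combining the cases gives the stated equivalence.

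\textbf{Expected obstacle.} The hardest step is controlling off-path behaviour: liabilities on paths through the $M$-edges are unconstrained. This matters both for the lemma and for showing that agents on $P$ have no incentive to deviate. I would first try to dispose of it through the lemma, i.e.\ via \textit{efficient implementation} applied to suitably modified loss functions, rather than through any direct bound on $\phi$.
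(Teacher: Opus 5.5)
Your $\ell'$ is essentially the paper's auxiliary loss function $\tilde\ell$. The two steps you defer, however, are exactly where your argument breaks.

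\textbf{The lemma and the step on $P$.} Your sketch of the auxiliary lemma does not work. Setting the losses on the history to zero changes the realized losses of every path through $m$ after that history, so \textit{realized-loss dependence} transports none of the subgame payoffs from $\ell'$ to $\hat\ell$. Raising all other edges to $M$ changes the payoffs of every other continuation, so the old continuation strategies need not remain an SPE. Nor can a loss function make a history ``unavoidable''. Likewise, following $P$ is not ``trivially optimal'', because liabilities on paths through $M$-edges are unconstrained by either axiom.

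The missing observation is that nobody on $P$ needs to follow $P$. With $a<b$, every $\ell'$-efficient path runs through $P\oplus ij$. Any path that leaves $P$ before $i$, or leaves $i$ other than via $ij$, is inefficient. So suppose $\phi_i(Q_j,\ell')\ge\phi_i(Q_k,\ell')$ and proceed as follows.
\begin{itemize}
\item Start from an SPE of $\mathcal{G}(\ell';\phi)$ with outcome $Q_j$.
\item Replace its play after $P\oplus ik$ by an SPE of that subgame realizing $P_k$. To get one, restrict an SPE with outcome $Q_k$ of the game at $\ell'$ with $\ell'(ij)$ raised by $b-a$, where $Q_k$ is efficient. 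No path in that subgame uses $ij$, so \textit{realized-loss dependence} makes its payoffs identical under $\ell'$.
\item At $P$, let $i$ play a best reply other than $j$; one exists, since $k$ does at least as well as $j$.
\item Recompute the choices at the prefixes of $P$ by backward induction.
\end{itemize}
The result is an SPE whose outcome is inefficient whether or not the agents on $P$ stay on $P$. This contradicts \textit{efficient implementation}. It proves ``$\Rightarrow$'' for the given $P_j,P_k$ without any lemma. Your lemma (robust efficient implementation) then follows from ``$\Rightarrow$'' by backward induction, rather than being needed for it.

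\textbf{The tie case.} The remaining direction amounts to $a=b\Rightarrow\phi_i(Q_j,\ell)=\phi_i(Q_k,\ell)$. Here your proposal has a genuine hole, and it cannot be closed in general. $P_k$ is given, so you cannot ``choose'' it to be the equilibrium continuation. Suppose $P_j$ and $P_k$ meet at a node $x$ and then follow different tails. Both tails are cheap under $\ell'$, and the $\ell'$-efficient continuations from $j$ and $k$ are not unique. (Incidentally, only continuations built from edges of $P_j$ and $P_k$ survive in $\mathcal{E}(\ell')$, not all of $\mathcal{E}_j(\ell)$.) An SPE realizing $Q_k$ then only gives $\phi_i(Q_k,\ell')\le\phi_i(P\oplus ij\oplus C_j,\ell')$, for whatever $C_j$ it prescribes after $ij$.

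In this configuration the claim is false. Take edges $s\to m$, $s\to u\to t_3$, $m\to j\to x$, $m\to k\to x$, $x\to t_1$, $x\to t_2$; no agent is a bottleneck. Define the rule by
\begin{itemize}
\item $\phi_s=\phi_x=\tfrac15\ell(P)$ and $\phi_m=\tfrac12 g_P(\ell(P))$;
\item $g_P(c)=c$ for $c<5$ and $g_P(c)=c+1$ for $c>5$;
\item $g_P(5)=6$ if $xt_2\in E_P$, and $g_P(5)=5$ otherwise;
\item the sink of $P$ pays the nonnegative remainder.
\end{itemize}
\textit{Realized-loss dependence} is immediate. Each chooser's liability is strictly increasing in $\ell(P)$ across all paths, so every SPE outcome is efficient. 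Every efficient path is an SPE outcome: let $x$ choose the same efficient tail after $smj$ and after $smk$, which neutralizes $m$'s tie-break. Now put losses $1$ on $sm,mj,mk,jx,kx$, losses $2$ on $xt_1,xt_2$, and losses $10$ on $su,ut_3$. Then $\ell(mj)+L_j=\ell(mk)+L_k=4$, but $\phi_m(smjxt_1,\ell)=2.5<3=\phi_m(smkxt_2,\ell)$.

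The paper's proof has the same blind spot. It infers $\phi_i(Q_j,\tilde\ell)\ge\phi_i(Q_k,\tilde\ell)$ from $Q_k\in\mathcal{S}(\tilde\ell;\phi)$, which presumes that the equilibrium continuation after $ij$ is $P_j$. The tie case, and hence the full equivalence, does go through when the $\ell'$-efficient continuations from $j$ and $k$ are unique, e.g.\ when $P_j$ and $P_k$ are node-disjoint. In that case robust efficient implementation forces the off-path continuations to be exactly $P_j$ and $P_k$.
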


An immediate implication of Proposition~\ref{PR:eff} is the following.
Fix a rule $\phi$ satisfying the two properties together with an arbitrary loss function $\ell \in \mathcal{L}$ and equilibrium strategy profile $\sigma \in \mathcal{S}(\ell; \phi)$.
Then, for each agent $i$ and history $P \in \mathcal{P}^-_i$,
\[ \textstyle
    \sigma_i (P) \in \arg \min_{j \in N^+_i} \left ( \ell(ij) + L_j \right ).
\]
That is to say, not only do agents make efficient decisions in equilibrium, but the same extends even off the equilibrium path.
We denote this property \textit{robust efficient implementation}. 
We remark also that \textit{robust efficient implementation} is a genuine strengthening of \textit{efficient implementation}:
there are rules that satisfy the latter but not the former.%
\footnote{For instance, let the rule $\phi$ ``punish'' the first agent to make an inefficient choice.
If $P \in \mathcal{E}(\ell)$, then share losses equally: $\phi_i(P,\ell) = 1/n \cdot \ell(P)$;
otherwise, for $P' \not \in \mathcal{E}(\ell)$, there is a first agent $i$ to make an inefficient choice (that is, all paths resulting from $i$'s choice are inefficient).
In this case, assign full liability to~$i$:
$\phi_i(P',\ell) = \ell(P')$.
This rule satisfies \textit{efficient implementation}, but off the equilibrium path agents are free of liability and efficient continuations are not guaranteed.
Hence, it fails \textit{robust efficient implementation}.} 

\begin{corollary} \label{CO:eff}
    \textit{Efficient implementation} and \textit{realized-loss dependence} jointly imply \textit{robust efficient implementation}.
\end{corollary}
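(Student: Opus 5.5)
The plan is to derive the corollary directly from Proposition~\ref{PR:eff}, by backward induction along the finite acyclic game tree. Fix $\ell \in \mathcal{L}$, a rule $\phi$ satisfying \textit{efficient implementation} and \textit{realized-loss dependence}, and a subgame-perfect equilibrium $\sigma$ of $\mathcal{G}(\ell;\phi)$. We must show that for every agent $i$ and every history $P \in \mathcal{P}^-_i$, including histories off the equilibrium path, $\sigma_i(P) \in \arg\min_{j \in N^+_i}(\ell(ij)+L_j)$.

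The induction runs over agents in reverse topological order $\leq$. The inductive claim is that, in every subgame starting at a history ending at agent $j$, the continuation induced by $\sigma$ is an element of $\mathcal{E}_j(\ell)$. For a sink this holds trivially. Suppose the claim holds for all successors of $i$. Then, at any history $P \in \mathcal{P}^-_i$, choosing $j$ yields the realized path $P \oplus ij \oplus P_j$ for some $P_j \in \mathcal{E}_j(\ell)$, namely the one induced by $\sigma$.

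Subgame perfection requires that $i$'s choice minimize $\phi_i(P \oplus ij \oplus P_j,\ell)$ over $j \in N^+_i$. By Proposition~\ref{PR:eff}, this liability is strictly ordered exactly as $\ell(ij)+L_j$, with the comparison holding for any choice of efficient continuations. Hence if some $k$ has $\ell(ik)+L_k < \ell(ij)+L_j$, choosing $j$ is strictly worse for $i$. Therefore $\sigma_i(P)$ attains the minimum of $\ell(ij)+L_j$.

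Given such a $j$, dynamic consistency gives $ij \oplus P_j \in \mathcal{E}_i(\ell)$, since $\ell(ij)+L_j = L_i$. This closes the induction, and applying the conclusion at each $i$ and each history yields the statement.

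The main point of care is that Proposition~\ref{PR:eff} only compares paths whose continuations are efficient. That is precisely why the induction must first establish that continuations in every subgame, on or off path, are efficient. The issue is handled by the reverse-topological ordering, since all subgames after $i$'s move are resolved before $i$ is considered. A second minor point is that the equivalence in Proposition~\ref{PR:eff} also gives equal liabilities among all minimizers, so any minimizer is consistent with equilibrium. Only the inclusion into the $\arg\min$ is needed, however.
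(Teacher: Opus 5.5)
Your proof is correct and takes essentially the paper's route: the paper states the corollary as an immediate consequence of downstream monotonicity (Proposition~\ref{PR:eff}). Your backward induction in reverse topological order makes explicit the step the paper leaves implicit, namely that the $\sigma$-induced continuations after each of $i$'s choices are efficient in every subgame, so Proposition~\ref{PR:eff} applies at every history, on or off the equilibrium path.
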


\subsection{Pairwise collusion-proofness}

Whereas \textit{efficient implementation} addresses individual incentives, our next axiom pertains to group collusion.
Take an equilibrium path $P$ and a pair of agents~$i$ and~$j$.
It should then not be the case that a strategic deviation by $i$ makes the pair jointly improve.
If such a deviation was beneficial, then %
$i$ and~$j$ could arrange side payments to make them both better off.%

\begin{axiom}[Pairwise collusion-proofness]
    Consider a loss function $\ell \in \mathcal{L}$ and equilibrium strategy profile $\sigma$ with path $P = \Pi(\sigma) \in \mathcal{S}(\ell; \phi)$. 
    For each agent $i \in N$, strategy $\sigma'_i \colon \mathcal{P}^-_i \to N^+_i$ inducing path $P' = \Pi((\sigma'_i,\sigma_{-i}))$, and agent $j \in N$,
    \[
        \phi_i(P,\ell) + \phi_j(P,\ell) \leq \phi_i(P',\ell) + \phi_j(P',\ell).
    \]
\end{axiom}

We next find that, when taken together, \textit{efficient implementation} and \textit{pairwise collusion-proofness} imply that when multiple efficient paths exist, then it makes no difference which we select.
This is in itself desirable to circumvent any form of potential conflicts between agents on which efficient path to target.
Restated in the language of implementation theory, the problem can be viewed as associating to each loss function a set of outcomes (here the efficient paths);
the axioms then imply that the corresponding social choice correspondence is \textit{essentially single-valued} \citep[e.g.][]{MaskinMoore1999,MaskinSjostrom2002}.

\begin{proposition}[Efficient-path invariance] \label{PR:EPI}
    Let $\phi$ satisfy \textit{efficient implementation} and \textit{pairwise collusion-proofness}.
    For each loss function $\ell \in \mathcal{L}$ and all efficient paths $P,P' \in \mathcal{E}(\ell)$, 
    \[
        \phi(P,\ell) = \phi(P',\ell).
    \]
\end{proposition}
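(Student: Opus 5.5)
The plan is to compare the two efficient paths at the point where they split and to use \textit{pairwise collusion-proofness} for a single deviation by the agent who makes that split. Write $P = Q \oplus ik \oplus R$ and $P' = Q \oplus ik' \oplus R'$, where $Q \in \mathcal{P}^-_i$ is the longest common prefix and $k \neq k'$. Because the graph is acyclic, the only decision that separates the two outcomes ``locally'' is $i$'s choice at history $Q$. By \textit{efficient implementation}, both $P$ and $P'$ are subgame-perfect equilibrium outcomes, since $\mathcal{S}(\ell;\phi)=\mathcal{E}(\ell)$. Fix equilibria $\sigma$ with $\Pi(\sigma)=P$ and $\sigma'$ with $\Pi(\sigma')=P'$.

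\textbf{Splicing step.} Define $\sigma''$ by letting it agree with $\sigma$ everywhere except at histories extending $Q \oplus ik'$, where it agrees with $\sigma'$. Then $\Pi(\sigma'')=P$. If agent $i$ at history $Q$ switches to $k'$, the outcome becomes $P'$. I would then check subgame perfection of $\sigma''$ as follows.
\begin{itemize}
\item Subgames starting strictly after $Q\oplus ik'$ are subgames of $\sigma'$, so they are in equilibrium.
\item Subgames disjoint from that history are subgames of $\sigma$, so they are in equilibrium.
\item Agents on $Q$ before $i$ face exactly the same outcomes as under $\sigma$. This holds because the splice only changes what happens after $i$ deviates from $\sigma_i(Q)=k$.
\end{itemize}
So the only incentive condition to verify is that of $i$ at $Q$, which is $\phi_i(P,\ell)\le \phi_i(P',\ell)$. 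Symmetrically, the splice of $\sigma$ into $\sigma'$ after $Q\oplus ik$ is an equilibrium with outcome $P'$ if and only if $\phi_i(P',\ell)\le\phi_i(P,\ell)$.

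\textbf{Collusion step.} Suppose $\phi_i(P,\ell)=\phi_i(P',\ell)$. Then both splices are equilibria. Applying \textit{pairwise collusion-proofness} to $\sigma''$ with the deviation $\sigma''_i(Q)=k'$ gives
\[
\phi_i(P,\ell)+\phi_j(P,\ell)\le \phi_i(P',\ell)+\phi_j(P',\ell)\quad\text{for every } j .
\]
Applying it to the symmetric splice gives the reverse inequality. Hence $\phi_i+\phi_j$ coincides at $P$ and $P'$ for every $j$. Combined with equality at $i$, this yields $\phi_j(P,\ell)=\phi_j(P',\ell)$ for all $j$, which is the claim.

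\textbf{Main obstacle.} The difficult step is ruling out a strict preference of $i$, say $\phi_i(P,\ell)<\phi_i(P',\ell)$. Only the first splice is then an equilibrium, and the collusion step gives only one inequality. My first attempt would be to derive a contradiction with $P'\in\mathcal{S}(\ell;\phi)$. Take $\sigma'$ and replace its continuation after $Q\oplus ik$ by $\sigma$'s, so that $i$ strictly prefers to go to $k$. Then argue that every subgame-perfect equilibrium whose outcome passes through $Q$ must, after re-optimisation at $i$, end in $P$. The plan is to use this to show that no equilibrium with outcome $P'$ can exist, contradicting \textit{efficient implementation}.

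The delicate part is that changing $i$'s action changes the outcomes faced by agents earlier on $Q$, who might then deviate elsewhere. I would handle this by backward induction along $Q$. At each earlier agent, \textit{pairwise collusion-proofness} applied with $j=i$ gives $\phi_m(P)+\phi_i(P)\le\phi_m(\cdot)+\phi_i(\cdot)$ for the off-$Q$ alternatives. I would use this to keep the prefix $Q$ as an equilibrium prefix. If that fails, the fallback is to use a different loss function: perturb $\ell$ off the relevant paths so that $P'$ is the unique efficient path and invoke \textit{efficient implementation} there. However, without \textit{realized-loss dependence} this requires care about how $\phi$ changes with $\ell$.
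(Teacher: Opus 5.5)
Your splicing construction and the collusion step are sound. Once $\phi_i(P,\ell)=\phi_i(P',\ell)$, both splices are subgame-perfect equilibria in which $i$'s single deviation at $Q$ switches the outcome between $P$ and $P'$. \textit{Pairwise collusion-proofness} applied in both directions then gives $\phi_j(P,\ell)=\phi_j(P',\ell)$ for every $j$; this is exactly what the paper's phrase ``applied from $P$ to $P'$ and from $P'$ to $P$'' tacitly requires.

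The genuine gap is the step you flag, and your plan for it fails. An equilibrium $\sigma'$ with outcome $P'$ need not continue with $R$ after $Q\oplus ik$. It can prescribe a different, even inefficient, continuation there that punishes $i$, and \textit{efficient implementation} says nothing about such off-path subgames. The paper's proof handles this step by asserting that a strict preference of $i$ for $P$ would make $P'$ a non-equilibrium path, which is the same unjustified inference.

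In fact no argument can close the gap, because the proposition as stated is false. Take $N=\{s,k,k',t\}$ with edges $sk, sk', kt, kk', k't$ (Assumption~\ref{A2} holds). Let $\ell_0(sk)=\ell_0(sk')=0$ and $\ell_0(kt)=\ell_0(kk')=\ell_0(k't)=1$, so that $P=(s,k,t)$ and $P'=(s,k',t)$ are efficient while $P''=(s,k,k',t)$ is not. Listing agents as $(s,k,k',t)$, set $\phi(P,\ell_0)=(0,1,0,0)$, $\phi(P',\ell_0)=(\tfrac12,\tfrac12,0,0)$, $\phi(P'',\ell_0)=(1,1,0,0)$, and use equal division at every other loss function. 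Since $k$ is indifferent, the equilibria at $\ell_0$ are ($s\to k$, $k\to t$) with outcome $P$ and ($s\to k'$, $k\to k'$) with outcome $P'$, so \textit{efficient implementation} holds. Checking the three outcome-changing unilateral deviations confirms \textit{pairwise collusion-proofness}. Yet $\phi(P,\ell_0)\neq\phi(P',\ell_0)$: $s$ strictly prefers $P$, but $P'$ is sustained by the threat $k\to k'$.

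What is missing is a hypothesis that disciplines off-path play, and your fallback points the right way: add \textit{realized-loss dependence}, which is assumed wherever the proposition is used later. With it, every equilibrium plays a cheapest continuation at every history (Corollary~\ref{CO:eff}). You can also see this directly. Give a large loss to every edge branching off a history $H$ before its last agent. By realized-loss dependence the subgame at $H$ is unchanged, while every path not extending $H$ becomes inefficient. An inefficient equilibrium of that subgame would then extend, by backward induction along $H$, to an inefficient equilibrium outcome.

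Then induct on the first divergence agent in reverse topological order. In any equilibrium with outcome $P'$, the continuation after $Q\oplus ik$ is a cheapest continuation $R''$ from $k$. So $Q\oplus ik\oplus R''$ is efficient and, if different from $P$, diverges from it strictly after $i$. By the induction hypothesis $\phi(Q\oplus ik\oplus R'',\ell)=\phi(P,\ell)$, whence $\phi_i(P',\ell)\le\phi_i(P,\ell)$; symmetrically $\phi_i(P,\ell)\le\phi_i(P',\ell)$. Your splicing and collusion step then finish the proof.
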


The proof is straightforward.
Take two distinct efficient paths $P$ and $P'$, so $\ell(P) = \ell(P')$, and let $i$ be the first agent to make different choices under the two.
Any strict preference for $i$ between the two paths would rule out one as an equilibrium outcome, contradicting \textit{efficient implementation}.
Moreover, any strict preference for some agent $j \neq i$ between the two would allow $i$ and $j$ to collude, contradicting \textit{pairwise collusion-proofness}.
Hence, liabilities must be the same for the two paths.

Next, Proposition~\ref{PR:redistribution} shows that rules satisfying our basic principles---%
the fundamental efficiency-incentive alignment of \textit{efficient implementation},
the informational constraint of \textit{realized-loss dependence},
and the stability property of \textit{pairwise collusion-proofness}---must be \emph{redistribution invariant}.
That is to say, although liabilities $\phi(P,\cdot)$ can depend on total losses~$\ell(P)$, they cannot depend on where each loss is incurred along the edges of~$P$.

\begin{proposition}[Redistribution invariance] \label{PR:redistribution}
    Let $\phi$ satisfy \textit{efficient implementation}, \textit{realized-loss dependence}, and \textit{pairwise collusion-proofness}.
    For each path $P \in \mathcal{P}$ and all loss functions~${\ell,\ell' \in \mathcal{L}}$, 
    \[
        \ell(P) = \ell'(P)
        \, \implies \,
        \phi(P,\ell) = \phi(P,\ell').
    \]
\end{proposition}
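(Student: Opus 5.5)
Fix a path $P$ and loss functions $\ell,\ell'$ with $\ell(P)=\ell'(P)$. The plan is to use \textit{realized-loss dependence} to change the losses off~$P$ freely, so that we can build auxiliary loss functions in which $P$ ties for efficiency with another path. Then we use Proposition~\ref{PR:EPI} (\textit{efficient-path invariance}) to move from $\phi(P,\ell)$ to $\phi(P,\ell')$ through a chain of such ties. Only the restrictions of $\ell,\ell'$ to $E_P$ matter. We may therefore assume that off-path edges carry whatever losses we choose, for instance $0$ or a very large constant~$M$.

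The first and main step handles two on-path loss profiles that differ by moving mass between two edges of $P$ that are adjacent in a suitable sense. Choose an auxiliary path $Q\neq P$ that leaves $P$ at some node $a$ and rejoins it or ends at a sink. Such a $Q$ exists by Assumption~\ref{A2}, since for any interior node $i$ of $P$ there is a path avoiding~$i$; this is the one place the assumption is used. Set off-path edges of $Q$ so that $Q$ is efficient and ties with $P$ under both profiles. Give the other off-path edges loss $M$, large enough that only $P$ and $Q$ (and possibly their recombinations) are efficient. Proposition~\ref{PR:EPI} then gives $\phi(P,\ell_1)=\phi(Q,\ell_1)$ and $\phi(P,\ell_2)=\phi(Q,\ell_2)$.

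The trick is to choose $\ell_1,\ell_2$ so that they agree on $E_Q$, which is possible because $Q$ shares only part of $P$'s edges. Take $\ell_1,\ell_2$ to differ only on edges of $P$ lying in the segment that $Q$ bypasses, while keeping the segment's total loss fixed, so that both ties persist. Realized-loss dependence then gives $\phi(Q,\ell_1)=\phi(Q,\ell_2)$, and hence $\phi(P,\ell_1)=\phi(P,\ell_2)$. This shows that redistributing loss within a segment of $P$ bypassed by some alternative route leaves $\phi(P,\cdot)$ unchanged. The bypassed segment runs from where $Q$ leaves $P$ to where $Q$ rejoins it, or to the end of $P$ if $Q$ ends at a different sink.

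The remaining step is to chain such redistributions so as to reach any target profile with the same total. Concretely, we want to move all of $P$'s loss onto a single fixed edge, say the first edge $e_1$ out of the source, from both $\ell$ and $\ell'$. It suffices that every edge of $P$ can exchange mass with some other edge through a chain of bypassable segments. The source has several outgoing edges, so a path leaving $s$ by another edge bypasses an initial segment of $P$ containing $e_1$. Each interior node $i$ of $P$ is avoided by some path, and this yields a bypassed segment containing both edges incident to $i$ on $P$. These overlapping segments connect all edges of $P$, so mass can be shuttled edge by edge onto~$e_1$. Doing this for both $\ell$ and $\ell'$ gives the same concentrated profile, because $\ell(P)=\ell'(P)$, and so $\phi(P,\ell)=\phi(P,\ell')$.

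The main obstacle is the tie construction. We must ensure that no path other than $P$ and the bypass $Q$ becomes efficient in a way that breaks the argument; extra efficient paths are harmless for Proposition~\ref{PR:EPI}. We must also ensure that $Q$'s own off-$P$ edges can be given nonnegative losses achieving exactly a tie. If the bypassed segment of $P$ has loss $c$, the bypass portion of $Q$ needs total loss $c\ge 0$, which is feasible by putting $c$ on one of its edges. Care is needed when $Q$ ends at a different sink, so that the bypassed tail is compared properly, and when $Q$ rejoins $P$ and then leaves it again. If the latter happens, I would simply truncate to the first departure and rejoining and set the remaining edges of $Q$ equal to those of $P$, replacing $Q$ by that recombined path.
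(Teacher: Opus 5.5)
Your argument is correct, but it uses a different construction from the paper's.

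\textbf{The paper's route.} It first applies Lemma~\ref{LE:irreducible}, which builds a potential-based loss function $\hat\ell(ij)=c_j-c_i$ that agrees with $\ell$ on $P$ and makes \emph{every} path efficient. It then takes the whole path $P'$ avoiding the middle node $j$ as the comparison path and performs a local node shift: lower $\ell(ij)$ by $\delta$ and raise \emph{all} of $j$'s outgoing edges by $\delta$. After this shift, $P$ and $P'$ still tie at minimal cost, no path becomes cheaper, and $P'$'s losses are untouched. Efficient-path invariance and realized-loss dependence therefore carry $\phi(P,\cdot)$ across one adjacent shift, and arbitrary redistributions are chained from such shifts.

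\textbf{Your route.} You extract a single excursion of the avoiding path, splice it into $P$ to obtain a bypass $Q$, and price every other edge at a large $M$. Then $P$ and $Q$ are the only efficient paths, and you can redistribute freely within the whole bypassed segment in one step.

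\textbf{What each buys.} Your version is self-contained, since it does not need Lemma~\ref{LE:irreducible}, and it yields a slightly stronger intermediate statement. The price is the splicing bookkeeping. The paper's version avoids any splicing because the entire avoiding path serves as the witness. Its lemma is also reused later, in Proposition~\ref{PR:pathindependence} and in the positivity-of-weights step of Theorem~\ref{TH:fixedweight}.

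\textbf{One correction.} Keeping the \emph{first} departure and rejoining of the avoiding path is wrong in general, because that excursion may rejoin $P$ before the node $i$ you need to bypass. For example, take $P=s\to v_1\to v_2\to v_3\to t$, $i=v_2$, and the avoiding path $s\to x\to v_1\to v_3\to t$. Its first excursion only bypasses $s\to v_1$. You should instead keep one of the following:
\begin{itemize}
\item the excursion whose endpoints $a,b$ on $P$ straddle $i$;
\item or, if the last $P$-node of the avoiding path precedes $i$, the final excursion to a different sink.
\end{itemize}
Such an excursion exists precisely because the path avoids $i$.

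Finally, the source bypass in your chaining step is unnecessary. The segments around the interior nodes already link $e_1,e_2,\dots$ consecutively, and a single-edge path needs no redistribution at all.
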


The proof of Proposition~\ref{PR:redistribution} is a bit more involved.
We decompose the redistribution from $\ell$ to $\ell'$ into a sequence $\ell, \ell^1, \ell^2, \dots, \ell'$ of local adjustments that preserve the total loss on the path.
In each step, the respective loss function is designed to keep the relevant paths efficient and leave realized losses unchanged for suitably chosen comparison paths.
\textit{Efficient-path invariance} (Proposition~\ref{PR:EPI}) ensures that liabilities are equal across paths at each step, whereas \textit{realized-loss dependence} carries the argument across steps.
By iterating this reasoning, we find that redistributing losses along a path will not affect liabilities.

Next, Proposition~\ref{PR:pathindependence} offers a complementary finding:
the rules satisfying our desirable axioms must also be \textit{path independent}.
That is to say, liabilities $\phi(\cdot,\ell)$ cannot depend on which of the equally-costly paths was chosen. 

\begin{proposition}[Path independence] \label{PR:pathindependence}
    Let~$\phi$ satisfy \textit{efficient implementation}, \textit{realized-loss dependence}, and \textit{pairwise collusion-proofness}.
    For all paths~$P,P' \in \mathcal{P}$ and each loss function~${\ell \in \mathcal{L}}$, 
    \[
        \ell(P) = \ell(P') 
        \, \implies \,
        \phi(P,\ell) = \phi(P',\ell).
    \]
\end{proposition}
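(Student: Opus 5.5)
The plan is to reduce path independence to efficient-path invariance (Proposition~\ref{PR:EPI}) by changing off-path losses, which is allowed by \textit{realized-loss dependence}, and to handle overlaps between $P$ and $P'$ with redistribution invariance (Proposition~\ref{PR:redistribution}). Fix $P,P'$ and $\ell$ with $\ell(P)=\ell(P')=c$.

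The key observation is that only the total realized loss on each path matters, by Proposition~\ref{PR:redistribution}. So I will look for a single loss function $\hat\ell$ with three properties: $\hat\ell(P)=\hat\ell(P')=c$; both $P$ and $P'$ are efficient under $\hat\ell$; and every other path costs at least $c$. Given such an $\hat\ell$, Proposition~\ref{PR:EPI} yields $\phi(P,\hat\ell)=\phi(P',\hat\ell)$. Proposition~\ref{PR:redistribution} then gives $\phi(P,\ell)=\phi(P,\hat\ell)$, since the totals on $P$ coincide, and likewise $\phi(P',\ell)=\phi(P',\hat\ell)$, which is the claim. Realized-loss dependence is implicitly used inside Proposition~\ref{PR:redistribution}; I need it explicitly only if I prefer to keep $\hat\ell=\ell$ on $E_P$.

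The construction of $\hat\ell$ is the main obstacle, because $P$ and $P'$ may share edges and the losses on them must be placed so that neither path is undercut. First I would try putting the whole mass $c$ on a single edge that lies on both paths, namely the first edge if they share it, and zero loss on every other edge of $E_P\cup E_{P'}$. That fails in general, since a third path might also use that edge and avoid paying more. The fix is to assign a very large loss $M>c$ to every edge outside $E_P\cup E_{P'}$. Any other path then either uses such an edge, costing more than $c$, or lies entirely inside $E_P\cup E_{P'}$.

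For a path of the second kind, a mixture that switches between $P$ and $P'$ at common nodes, I must make sure its cost is at least $c$. To do this I place the mass on edges that every source--sink path inside $E_P\cup E_{P'}$ must traverse. Take the source's outgoing edges $e\in E_P$ and $e'\in E_{P'}$, with $e=e'$ allowed. If $e=e'$, put $c$ on $e$. If $e\neq e'$, put $c$ on each of $e$ and $e'$ and $0$ elsewhere in the union. Every path inside the union starts with $e$ or $e'$, so each such path costs at least $c$, while $P$ costs exactly $c$ provided $e'\notin E_P$. That holds because a path leaves the source only once, and symmetrically for $P'$. This leaves $P$ and $P'$ efficient with total $c$, which completes the argument. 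I would double-check the degenerate case $c=0$: there all loss functions agreeing in total on each path give the same liabilities, so the choice $\hat\ell=0$ on the union with $M>0$ elsewhere works directly.
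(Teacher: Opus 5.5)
Your proposal is correct and follows essentially the paper's route: build an auxiliary $\hat\ell$ under which $P$ and $P'$ are both efficient with their original totals, apply efficient-path invariance, and transfer back to $\ell$ via redistribution invariance. The only difference is the choice of $\hat\ell$. The paper takes it from Lemma~\ref{LE:irreducible}, which matches $\ell$ on $P$ and makes every path efficient, so it uses \textit{realized-loss dependence} on $P$ and redistribution invariance only on $P'$. Your explicit construction puts $c$ on the source edges and $M$ outside $E_P\cup E_{P'}$, so you invoke redistribution invariance for both paths.
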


To sketch the argument, we go via an alternative loss function $\hat\ell$ that leaves losses on~$P$ unchanged while rendering all paths efficient.
Under $\hat\ell$, both $P$ and $P'$ are efficient and, by \textit{efficient-path invariance} (Proposition~\ref{PR:EPI}), liabilities are equal.
Since $\hat\ell$ coincides with~$\ell$ on $P$, \textit{realized-loss dependence} implies that liabilities on $P$ are unchanged.
Moreover, as~$P$ and~$P'$ have the same total loss under both loss functions, $\hat\ell$ differs from $\ell$ along $P'$ only by a redistribution;
by Proposition~\ref{PR:redistribution}, liabilities are again equal.
Taken together, this yields $\phi(P, \ell) = \phi(P', \ell)$.

An immediate consequence of Propositions~\ref{PR:redistribution} and~\ref{PR:pathindependence} is that liabilities should depend only on the path's total losses.

\begin{corollary} \label{CO:total}
    Let~$\phi$ satisfy \textit{efficient implementation}, \textit{realized-loss dependence}, and \textit{pairwise collusion-proofness}.
    For all paths $P,P' \in \mathcal{P}$ and loss functions~$\ell,\ell' \in \mathcal{L}$, 
    \[
        \ell(P) = \ell'(P') 
        \, \implies \,
        \phi(P,\ell) = \phi(P',\ell').
    \]
\end{corollary}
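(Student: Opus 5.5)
The plan is to chain Propositions~\ref{PR:pathindependence} and~\ref{PR:redistribution} through a single intermediate pair $(P',\ell)$, which differs from $(P,\ell)$ only in the path and from $(P',\ell')$ only in the loss function. So fix paths $P,P'\in\mathcal{P}$ and loss functions $\ell,\ell'\in\mathcal{L}$ with $\ell(P)=\ell'(P')$. The target is
\[
\phi(P,\ell)=\phi(P',\ell)=\phi(P',\ell').
\]
The second equality is an instance of Proposition~\ref{PR:redistribution}, provided $\ell(P')=\ell'(P')$. The first is an instance of Proposition~\ref{PR:pathindependence}, provided $\ell(P)=\ell(P')$.

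In general, however, $\ell(P')$ need not equal $\ell(P)$. To handle this, I first replace $\ell$ by an auxiliary loss function $\tilde\ell$. It agrees with $\ell$ on every edge of $E_P$, and it is chosen off $E_P$ so that $\tilde\ell(P')=\ell(P)$.

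Edges of $P'$ that lie in $E_P$ are already fixed. Let $c=\sum_{e\in E_{P'}\cap E_P}\ell(e)\le \ell(P)$. We then need $\sum_{e\in E_{P'}\setminus E_P}\tilde\ell(e)=\ell(P)-c\ge 0$.

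If $P'\ne P$, then $E_{P'}\setminus E_P$ is nonempty. Indeed, two distinct source--sink paths cannot have nested edge sets: both start at $s$ and end at a sink, so containment forces equality. We can therefore put the whole amount $\ell(P)-c$ on one edge of $E_{P'}\setminus E_P$ and set $\tilde\ell$ arbitrarily (say equal to zero) elsewhere. If $P'=P$, nothing needs to be done.

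The argument then runs as follows. By \textit{realized-loss dependence}, $\phi(P,\ell)=\phi(P,\tilde\ell)$, since $\tilde\ell$ and $\ell$ coincide on $E_P$. By Proposition~\ref{PR:pathindependence} applied to $\tilde\ell$, where $\tilde\ell(P)=\tilde\ell(P')$, we get $\phi(P,\tilde\ell)=\phi(P',\tilde\ell)$. Finally, Proposition~\ref{PR:redistribution} applied to the path $P'$ with $\tilde\ell(P')=\ell(P)=\ell'(P')$ gives $\phi(P',\tilde\ell)=\phi(P',\ell')$.

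The only delicate point is the construction of $\tilde\ell$, and specifically that the required adjustment is nonnegative and that there is a free edge available to absorb it. Both are settled by $c\le\ell(P)$ and the non-nesting of distinct paths. Everything else is a direct citation of the earlier propositions, so this is indeed a corollary.
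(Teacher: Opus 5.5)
Your proof is correct and follows the paper's route. The paper records this corollary as an immediate consequence of Propositions~\ref{PR:redistribution} and~\ref{PR:pathindependence}, and your chain $\phi(P,\ell)=\phi(P,\tilde\ell)=\phi(P',\tilde\ell)=\phi(P',\ell')$ is exactly the bridging step it leaves implicit. Your explicit $\tilde\ell$ (placing $\ell(P)-c$ on an edge of $E_{P'}\setminus E_P$) works, and the loss function $\hat\ell$ of Lemma~\ref{LE:irreducible} would serve equally well, since it agrees with $\ell$ on $E_P$ and gives every path total loss $\ell(P)$, as in the proof of Proposition~\ref{PR:pathindependence}.
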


\subsection{Scale invariance and the class of fixed-weight rules}

The final axiom is the conventional assumption that large and small problems are solved alike;
that is, the rule is \textit{scale invariant}.
Scaling all losses by a common factor scales liabilities accordingly.

\begin{axiom}[Scale invariance]
    For each path $P \in \mathcal{P}$, loss function $\ell \in \mathcal{L}$, and scalar $\alpha > 0$,
    \[
        \phi(P, \alpha \cdot \ell) = \alpha \cdot \phi(P, \ell).
    \]
\end{axiom}

As noted in Corollary~\ref{CO:total}, \textit{efficient implementation}, \textit{realized-loss dependence}, and \textit{pairwise collusion-proofness} imply that liabilities should depend only on the path's total losses.
With the additional requirements of \textit{scale invariance}, we can pin down a parameterized family of rules.
For this purpose, let first $\Delta^N = \{ w \in \mathbb{R}^N_{\geq 0} \mid \sum_i w_i = 1\}$ denote the usual $N$-simplex.
We associate to each point $w \in \Delta^N$ a rule that always shares total losses in proportion to the weights~$w$.

\begin{definition}[Fixed-weight rule $\phi^w$ with parameter~$w \in \Delta^N$]
    For each path $P \in \mathcal{P}$ and loss function $\ell \in \mathcal{L}$,
    \[
        \phi^w(P,\ell) = w \cdot \ell(P).
    \]
\end{definition}

In a sense, a fixed-weight rule simplifies the design problem from determining how losses should be assigned in each contingency to determining how total losses should be weighted across agents as a function of the network structure. 
That is, weights can be conditioned on graph-based primitives and normative considerations;
thereafter, liabilities scale accordingly.
Standard examples include \emph{equal division} ($w_i = w_j$ for all agents $i$ and $j$), weights proportional to the number of paths that agents are on or to source distance, as well as any form of node centrality measure \citep[e.g.][]{Hougaard2018,Jackson}.

It is not difficult to see that fixed-weight rules satisfy \textit{realized-loss dependence}, \textit{pairwise collusion-proofness}, and \textit{scale invariance}.
To ensure \textit{efficient implementation}, we need to impose one additional restriction on weights.
As agents with multiple outgoing edges need to be incentivized to make efficient choices (and there are loss functions under which it is critical that they do), 
they must be assigned positive weight.%
\footnote{To see why, take as example $w \in \Delta^N$ with $w_s = 0$:
that is to say, suppose that the source is free of liability.
Hence, we always have $\phi_s(P,\ell) = 0$, so the source is indifferent between all paths---including inefficient ones.
Then there can be inefficient equilibria, contradicting \textit{efficient implementation}.
To extend the conclusion to any agent $i$ with multiple outgoing edges, we apply the argument for the particular case in which the loss function $\ell$ is such that $i$ is part of at least one efficient and one inefficient path.}
Therefore, let 
\[
    \Delta^N_\star = \{ w \in \Delta^N \mid w_i > 0 \text{ for agents $i \in N$ with $\lvert N^+_i \rvert > 1$} \}. 
\]
We are now ready to state our main axiomatic result.
Independence of the axioms is shown in Appendix~\ref{APP:independence}.

\begin{theorem}[Characterization of fixed-weight rules] \label{TH:fixedweight}
    A rule~$\phi$ satisfies \textit{efficient implementation}, \textit{realized-loss dependence}, \textit{pairwise collusion-proofness}, and \textit{scale invariance} if and only if $\phi = \phi^w$ for some $w \in \Delta^N_\star$.
\end{theorem}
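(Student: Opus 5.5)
We prove both directions, starting with necessity since it carries the weight. Let $\phi$ satisfy the four axioms. By Corollary~\ref{CO:total}, there is a function $f\colon \mathbb{R}_{\geq 0} \to \mathbb{R}^N_{\geq 0}$ with $\phi(P,\ell) = f(\ell(P))$ for every path $P$ and loss function $\ell$. Well-definedness needs that every value $x \geq 0$ is attained as $\ell(P)$ for some $(P,\ell)$; this is immediate, e.g.\ by putting all loss $x$ on one edge of some path.

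We then use \textit{scale invariance} to get $f(\alpha x) = \alpha f(x)$ for all $\alpha > 0$ and $x \geq 0$. To see this, pick $(P,\ell)$ with $\ell(P) = x$; then $(\alpha\ell)(P) = \alpha x$, so $f(\alpha x) = \phi(P,\alpha\ell) = \alpha \phi(P,\ell) = \alpha f(x)$. Setting $w \equiv f(1)$ gives $f(x) = x\,w$ for $x > 0$. For $x = 0$, nonnegativity and balance give $\sum_i f_i(0) = 0$ with $f_i(0) \geq 0$, hence $f(0) = 0 = 0\cdot w$. Balance at $x = 1$ yields $\sum_i w_i = 1$, and nonnegativity yields $w \geq 0$, so $w \in \Delta^N$ and $\phi = \phi^w$.

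It remains to show $w_i > 0$ whenever $|N^+_i| > 1$; I expect this to be the main obstacle. Suppose instead $w_i = 0$ for such an agent $i$. Then $i$ is indifferent among all continuations, and we build a loss function under which this indifference produces an inefficient equilibrium. Take two successors $j \neq k$ of $i$. Fix a subpath $Q \in \mathcal{P}^-_i$ and sink continuations $R_j$ from $j$ and $R_k$ from $k$. Put loss $0$ on the edges of $Q \oplus ij \oplus R_j$, and a large loss $M$ on $ik$ and on every edge not on that path.
\begin{itemize}
\item Every path other than $Q \oplus ij \oplus R_j$ uses at least one positive-loss edge, so that path is the unique efficient path with cost $0$.
\item Consider the strategy profile in which every agent other than $i$ follows an efficient continuation at every history (a subgame-perfect choice), while $i$ chooses $k$ after history $Q$.
\item Agent $i$ is indifferent because $w_i = 0$. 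Every other agent minimizes $w_m \ell(\text{path})$, so choosing cheapest continuations is optimal for them whatever the sign of $w_m$.
\end{itemize}
Hence the profile is a subgame-perfect equilibrium. If $i = s$ it produces an inefficient path; if $i \neq s$, the players upstream of $i$ still route to $i$ because $Q$ is cost-minimal. Either way $\mathcal{S}(\ell;\phi) \neq \mathcal{E}(\ell)$. The delicate point is that an on-path deviation only changes total loss, so a profile is an equilibrium exactly when each agent $m$ with $w_m > 0$ picks a cheapest continuation; the equilibrium claim is verified by this observation.

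For sufficiency, let $w \in \Delta^N_\star$. \textit{Realized-loss dependence} and \textit{scale invariance} are immediate. For \textit{pairwise collusion-proofness}, any deviation by $i$ from an efficient path $P$ to $P'$ satisfies $\ell(P') \geq \ell(P)$, so $(w_i+w_j)\ell(P) \leq (w_i+w_j)\ell(P')$. For \textit{efficient implementation}, use backward induction.
\begin{itemize}
\item An agent with a single successor has no choice to make.
\item An agent with several successors has $w_i > 0$, so she strictly prefers the continuations minimizing $\ell(ij) + L_j$, given that later agents play cheapest continuations.
\item By dynamic consistency, the paths generated this way are exactly the efficient paths. In particular every efficient path is generated: at each node, the next edge of an efficient path attains the minimum of $\ell(ij) + L_j$.
\end{itemize}
This gives $\mathcal{S}(\ell;\phi^w) = \mathcal{E}(\ell)$ and completes the characterization.
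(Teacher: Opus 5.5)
Your necessity argument follows the paper's route (Corollary~\ref{CO:total} plus \emph{scale invariance}), and your sufficiency direction is fine. Your separate handling of total loss $0$ via balance and nonnegativity is in fact tidier than the paper's, which applies \emph{scale invariance} with $\alpha \geq 0$ although the axiom only covers $\alpha > 0$.

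The gap is in the positivity step: when $i \neq s$, the profile you exhibit is generally \emph{not} a subgame-perfect equilibrium. Choosing a minimizer of $\ell(mm') + L_{m'}$ is optimal for $m$ when everyone downstream plays cheapest continuations, but $i$ does not. An agent upstream of $i$ evaluates the realized path, which includes $i$'s choice of $k$, so the cost-minimality of $Q$ gives her no reason to route to $i$. Your closing observation is correct only if ``cheapest continuation'' means cheapest given the strategies actually played downstream, and under that reading your profile violates it.

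Concretely, take the graph $s \to i$, $s \to t$, $i \to j$, $i \to k$, $j \to t$, $k \to t$ (Assumption~\ref{A2} holds). Set $w_s = 1$ and all other weights $0$, and use your loss function: $0$ on $si, ij, jt$ and $M$ on $st, ik, kt$. In your profile $s$ goes to $i$, $i$ goes to $k$, and $s$ pays $2M$, whereas deviating to $t$ costs her only $M$. So two of your claims are false: that cheapest continuations are optimal for every $m \neq i$, and that upstream players still route to $i$.

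The conclusion survives a small repair, and you do not need upstream agents to reach $i$ at all.
\begin{itemize}
\item Fix $i$'s strategy to choose $k$ after history $Q$ (arbitrarily elsewhere), and complete the profile for all other agents by backward induction.
\item Since $w_i = 0$, agent $i$ is indifferent at every history, so this is a subgame-perfect equilibrium.
\item Its outcome cannot be $\bar P = Q \oplus ij \oplus R_j$, because that path requires $i$ to pick $j$ after $Q$.
\item As $\bar P$ is the unique efficient path, the equilibrium outcome is inefficient, contradicting \emph{efficient implementation}. In the example above, the outcome is $s \to t$ with loss $M > 0$.
\end{itemize}

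With this fix, your loss function, which makes a path through $i$ the \emph{unique} efficient path, is exactly what is needed. It is sharper than the paper's sketch, which first makes all paths efficient and then raises all but one of $i$'s outgoing edges. That construction leaves efficient paths bypassing $i$ (they exist by Assumption~\ref{A2}), so a positively weighted upstream agent can simply steer around $i$, and no inefficient equilibrium need arise.
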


The result of Theorem~\ref{TH:fixedweight} highlights a surprising implication: 
agents who \emph{can} make choices must bear a positive share of \emph{every} realized loss---even those arising on paths that completely bypass the agent and for which the agent's choices are irrelevant.
At its extreme, consider Figure~\ref{FIG:solidarity}, in which dashed edges have zero losses and solid edges have positive losses.
Even though total losses stem from the source's choice whereas agent $j$'s choice adds no additional harm, we will still have that $j$ is assigned positive liability;
for instance, we will have $\phi_j(P,\ell) > 0$ for path $P = (s \to i \to t)$.
Put differently, agent $j$ bears a higher cost than the total losses of the subgraph from $j$ and on, $\ell(jk) + \ell(jt) = 0$.

\begin{figure}[!htb]
    \centering
    \begin{tikzpicture}
        \node at (0,0) (s) {$s$};
        \node at (1,.8) (i) {$i$};
        \node at (2,0) (j) {$j$};
        \node at (3,-.8) (k) {$k$};
        \node at (4,0) (t) {$t$};
        \draw [thick,->] (s) |- (i);
        \draw [thick,->] (s) -- (j);
        \draw [densely dashed,->] (j) |- (k);
        \draw [densely dashed,->] (i) -| (t);
        \draw [densely dashed,->] (j) -- (t);
        \draw [densely dashed,->] (k) -| (t);
    \end{tikzpicture}
    \caption{Solid edges have positive losses, dashed edges have zero losses.} 
    \label{FIG:solidarity}
\end{figure}

Fixed-weight rules not only incentivize efficient implementation at the individual level, but even at the group level.
That is to say, if a group $S \subseteq N$ attempted to ``collude'' and coordinate their decisions, then they would still jointly pay $\sum_{i \in S} \phi_i(P,\ell) = \sum_{i \in S} w_i \cdot \ell(P)$.
Hence, they can do no better than minimizing $\ell(P)$ by choosing an efficient path. 
In this way, \textit{pairwise collusion-proofness} joint with our other axioms imply a stronger form of ``full'' collusion-proofness.
(This is reminiscent of properties such as group strategy-proofness, see e.g. \citealp{Barberaetal2016}.)
Further robustness arguments in favor of fixed-weight rules are discussed in Section~\ref{SEC:discussion}.

\begin{remark}[Necessity of ``no bottleneck agents'' for fixed-weight characterization] \label{REM:bottlenecks}
    If we do not impose Assumption~\ref{A2}, then there are rules outside the fixed-weight family that satisfy our axioms.
    For example, take a line graph such as $s \to i \to t$.
    As there is only one path, axioms such as \textit{efficient implementation} and \textit{realized-loss dependence} are vacuous (the unique path is always efficient, and all losses are always realized).
    Hence, we could for instance let each agent pay the loss they ``generate'';
    let $\widehat\phi$ be such that $\widehat\phi_s(P,\ell) = \ell(si)$ and $\widehat\phi_i(P,\ell) = \ell(it)$.
    This is \textit{scale invariant} yet falls outside the fixed-weight family.
    \end{remark}

\section{Choosing weights} \label{SEC:weights}

In this section, we single out a particular method of setting weights.
The underlying principle is that agents appearing on many potential cancellation paths, especially shorter ones, should receive higher weight.
Hence, a natural approach is to proceed path by path and assign weights reflecting the frequency and importance of an agent's presence across paths.
This parallels ideas in cooperative game theory that more ``pivotal'' agents receive higher shares, for example as in the \citeauthor{Shapley1953} (\citeyear{Shapley1953}) value.
Indeed, we will find that the weights~$w^*$ to be introduced next coincide with the Shapley value of a naturally associated cooperative game.
In contrast to the general case in which the value may be computationally intractable \citep{Chalkiadakisetal2012}, our model allows a convenient shortcut to quickly compute such Shapley-like weights $w^*$ even for large problems.

We assume that sinks $t$ are assigned zero weight.
This mirrors the logic of \textit{strict liability} (that an injurer is responsible for the losses she causes) from the literature on law and economics \citep[e.g.][]{Shavell2007}.
Here, in particular, sinks incur losses but cause none, so we take as given that they are free of liability \citep[compare also][]{GudmundssonHougaardKo2023}.
With this in mind, it will turn out to be convenient to work with the set of non-sink agents.
We denote them $N^* \equiv \{ i \in N \mid N^+_i \neq \emptyset \} \subset N$ and, analogously, let $N^*_P \equiv N_P \cap N^*$ be the non-sink agents of path~$P$.

\subsection{Path-by-path algorithm} \label{SUB:weightalgo}

We introduce first a simple algorithm to compute the weights $w^* \in \Delta^N_{\star}$. 
We proceed path by path.
Let $\lvert \mathcal{P} \rvert$ denote the number of paths.
All agents on a path are treated the same;
a value of $1 / \lvert\mathcal{P}\rvert$ is shared between all non-sink agents on the path.
Thereafter, we sum these values over all paths. 
As each agent is part of at least one path, all non-sink weights are positive.

\begin{definition}[Weights $w^* \in \Delta^N_{\star}$]
    For each path $P \in \mathcal{P}$ and agent~$i \in N$, define 
    \[
        v_i(P) = 
        \begin{cases} 
            1 / (\lvert N^*_P \rvert \cdot \lvert \mathcal{P} \rvert) & \text{if $i \in N^*_P$} \\
            0 & \text{otherwise}
        \end{cases}
    \]
    and set weight $w^*_i = \sum_{P \in \mathcal{P}} v_i(P)$. 
\end{definition}

Applied to the case of a tiered supply-chain network (with nodes representing suppliers, manufacturers, retailers etc. and sinks corresponding to end consumers), these weights may admit an even simpler interpretation.
Specifically, assume that edges link each layer only to the next one, and does so in a ``symmetric'' way:
all nodes in a given layer have the same number of outgoing edges and the same number of incoming edges.
In this case, $w^*$ is computed by double application of equal division:
first total losses are shared equally across layers, and then equally among nodes within each layer.%
\footnote{In this case, all paths are of the same length and include one node from each layer.
As agents within a layer have the same in- and out-degrees, they are part of the same number of paths.}
Hence, liabilities are highest in ``thin'' layers (e.g., bottleneck-like central distribution nodes).

\subsection{Relation to the Shapley value} \label{SUB:Shapley}

We continue our analysis of the fixed-weight rule with weights $w^*$ by highlighting a connection to cooperative game theory \citep[see e.g.,][]{PelegSudholter2007}.
Define the \textbf{path-counting game} $v$ with players $N^*$ (i.e., non-sink agents) as follows.
For every coalition $S \subseteq N^*$, let the worth $v(S) \in [0,1]$ be the fraction of all paths that only cover nodes in~$S$:
\[
    v(S) = \frac{\lvert\{ P \in \mathcal{P} \mid N^*_P \subseteq S \} \rvert}{\lvert\mathcal{P}\rvert}.
\]
For instance, if $s \to i \to t$ denotes a path, then this is counted for all coalitions $S$ with $\{ s, i \} \subseteq S$. 
As all paths include the source, coalitions $S$ that do not include the source $s$ have zero values.
For the (grand) coalition of non-sink agents, we have $v(N^*) = 1$.

Theorem~\ref{TH:shapley} shows that the weights $w^*$ coincide with the Shapley value of~$v$.
To build intuition, associate to each path $P \in \mathcal{P}$ the \emph{simple} game $v^P$ with $v^P(S) = 1$ if $N^*_P \subseteq S$ and $v^P(S) = 0$ otherwise.
The path-counting game $v$ is a linear combination of these simple games~$v^P$.
As the Shapley value is linear \citep{Shapley1953}, it follows that the Shapley value of $v$ is obtained by summing the Shapley values of the simple games, which yields $w^*$.

\begin{theorem} \label{TH:shapley}
    The Shapley value of the path-counting game $v$ equals~$w^*$. 
\end{theorem}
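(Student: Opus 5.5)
We will decompose the path-counting game into unanimity games and then use linearity of the Shapley value. For each path $P \in \mathcal{P}$, let $v^P$ be the simple game on player set $N^*$ with $v^P(S) = 1$ if $N^*_P \subseteq S$ and $v^P(S)=0$ otherwise. Counting paths coalition by coalition gives the identity
\[
    v(S) = \frac{1}{\lvert\mathcal{P}\rvert} \sum_{P \in \mathcal{P}} v^P(S) \qquad \text{for every } S \subseteq N^*,
\]
since each path $P$ with $N^*_P \subseteq S$ contributes exactly $1/\lvert\mathcal{P}\rvert$ to the right-hand side. This is an equality of set functions, so $v = \lvert\mathcal{P}\rvert^{-1}\sum_P v^P$ as games on $N^*$.

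Next we compute the Shapley value of each $v^P$. Note that $N^*_P$ is nonempty, because every path contains the source $s$, which is a non-sink. Hence $v^P$ is the unanimity game $u_{T}$ with carrier $T = N^*_P$. Its Shapley value is $1/\lvert T\rvert$ for each $i \in T$ and $0$ otherwise. This follows from the standard axioms: players outside $T$ are null players, players in $T$ are symmetric, and efficiency gives $v^P(N^*) = 1$. Alternatively, one can use the permutation formula directly: in a uniformly random ordering, $i \in T$ is pivotal exactly when it is the last member of $T$ to arrive, which happens with probability $1/\lvert T\rvert$. Thus $\mathrm{Sh}_i(v^P) = 1/\lvert N^*_P\rvert$ if $i \in N^*_P$ and $0$ otherwise.

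By linearity of the Shapley value,
\[
    \mathrm{Sh}_i(v) = \sum_{P\in\mathcal{P}} \frac{\mathbf{1}[i \in N^*_P]}{\lvert N^*_P\rvert\,\lvert\mathcal{P}\rvert} = \sum_{P} v_i(P) = w^*_i
\]
for every $i \in N^*$. The statement identifies a value defined on $N^*$ with a weight vector on $N$. We therefore extend by zero on sinks, which matches $w^*$, because sinks never belong to any $N^*_P$ and so $w^*_t = 0$.

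No step here is a real obstacle. The points that need care are checking that $N^*_P \neq \emptyset$, so that each $v^P$ is a genuine unanimity game, and handling the bookkeeping between the player set $N^*$ and the full agent set $N$.
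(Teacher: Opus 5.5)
Your proof is correct and follows the paper. The paper's main text sketches the same linearity decomposition into the simple games $v^P$. Its appendix proof then computes each path's contribution with the random-permutation formula: $i$ completes $P$ exactly when it is last among $N^*_P$, which has probability $1/\lvert N^*_P\rvert$. That is your alternative computation of the Shapley value of $v^P$.
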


If adding $i$ to coalition $S$ completes path $P$, then adding $i$ to a larger coalition $T \supseteq S$ also completes~$P$.
Hence, if $S \subseteq T$, then $v(S \cup \{i\}) - v(S) \leq v(T \cup \{i\}) - v(T)$.
That is to say, the path-counting game $v$ is convex and, therefore, the Shapley value is in the game's core \citep{Shapley1971}.
Although core stability is not central to our application, we can interpret this as that no coalition is unfairly treated:
no coalition is allocated less weight than the share of paths it realizes on its own.

\subsection{Polynomial-time computation} \label{SUB:poly}

An obstacle to using the Shapley value in practical applications is that exact computation quickly gets intractable. 
To compute $i$'s contribution to every coalition, the number of steps rapidly grows prohibitively large;
already with $n = 30$ agents, there are $2^{30} \approx 10^9$ coalitions.
In a sufficiently sparse graph with few paths, the algorithm in Subsection~\ref{SUB:weightalgo} may be tractable even if there are many nodes.
However, it too can run into a similar worst-case scenario.
The number of paths may grow exponentially with the number of agents so iterating over all paths may, again, be slow.
Figure~\ref{FIG:grid} gives an example of a grid-like graph with $2m + 2$ nodes and $2^m$ paths.
With $n \approx 60$ agents, we may have a billion paths to sum over.

\begin{figure}[!htb]
    \centering
    \begin{tikzpicture}
        \node at (0,0) (s) {$s$};
        \node at (6,0) (t) {$t$};
        \foreach \x in {1,2,3}{
            \node at (\x, .5) (i\x) {$i_\x$};
            \node at (\x,-.5) (j\x) {$j_\x$};
        }
        \node at (4, .5) (i4) {$\dots{}$};
        \node at (4,-.5) (j4) {$\dots{}$};
        \node at (5, .5) (i5) {$i_m$};
        \node at (5,-.5) (j5) {$j_m$};
        \draw (s) -- (i1);
        \draw (s) -- (j1);
        \foreach \x/\y in {1/2,2/3,3/4,4/5}{
            \draw (j\y) -- (i\x) -- (i\y);
            \draw (j\y) -- (j\x) -- (i\y);
        }
        \draw (i5) -- (t);
        \draw (j5) -- (t);
    \end{tikzpicture}
    \caption{A graph in which the number of paths grows exponentially with the number of two-agent ``layers''.}
    \label{FIG:grid}
\end{figure}

The key to ensure scalable, fast computation is to note that agent $i$ is attributed the same amount for each five-agent path no matter who the other four agents are.
For this reason, it suffices to find the \emph{distribution} of the agent's paths (how many of length~$1$, length~$2$, and so on) and one does not need to track all paths.
This then becomes a simple exercise in dynamic programming.
We exploit the topological order to make two passes:
one forward to compute subpaths from the source, and one backwards to compute continuations to the sinks.
Intuitively, if there are two subpaths from the source to agent $i$ of length $1$ and $2$ and two subpaths of length $3$ and $4$ from $i$ to the sinks, then $i$ is part of four source-sink paths:
one of length $1 + 3$, two of length $1 + 4 = 2 + 3$, and one of length $2 + 4$.

Let $n = \lvert N \rvert$ denote the number of nodes. 
Let $F(x, j) \in \mathbb{Z}$ denote the number of subpaths of length $x \in \{0, \dots, n\}$ from the source to agent~$j$.
This is computed as follows.
Begin by setting $F(0, s) = 1$ to capture the trivial subpath from the source to itself.
Moreover, set $F(x, s) = 0$ for positive lengths $x > 0$ and $F(0,j) = 0$ for all non-source nodes $j \neq s$.
Proceed in topological order, starting with nodes that have incoming links only from the source.
For each agent~$j$ and length~$x \in \{0, \dots, n\}$, let $N^-_j \equiv \{ i \in N \mid i \to j \}$ and set 
\[
    F(x+1, j) = \sum_{i \in N^-_j} F(x, i).
\]
Intuitively, for each subpath that reaches $i$ in $x$ steps, we concatenate the edge $i \to j$ and obtain one subpath that reaches $j$ in $x+1$ steps. 
In total, $F$ is computed by evaluating roughly $n^2$ sums.

Thereafter, we do an analogous backward pass:
let $B(x, i) \in \mathbb{Z}$ denote the number of subpaths of length $x \in \{0, \dots, n\}$ from agent $i$ to some sink.
This time, for each sink~$t$, set $B(0, t) = 1$, $B(x, t) = 0$ for $x > 0$, and $B(0,i) = 0$ for non-sink nodes $i\in N^*$.
Proceed instead from the end in reverse topological order;
in general, set 
\[
    B(x+1, i) = \sum_{j \in N^+_i} B(x, j).
\]

Let $P(y, i) \in \mathbb{Z}$ denote the number of source-sink paths of length $y \in \{1, \dots, n\}$ that agent~$i$ is part of.
This can be computed through the convolution
\[
    P(y, i) = \sum_{x = 1}^y F(x,i) \cdot B(y-x,i).
\]
(Strictly speaking, there are $x$ agents before $i$ and $y-x$ agents after $i$, out of which one is a sink;
hence, once we include $i$ in the path and exclude the sink, we indeed have $y$ non-sink agents on the path.)
This is a sum of roughly $n$ multiplications.%
\footnote{For large graphs, the process may be sped up through a fast Fourier transformation \citep{CooleyTukey1965}.
As a benchmark, computing $w^*$ on a normal laptop for a graph with $n = 1\,000$ nodes and $10^{17}$ paths takes roughly one second.
This would be intractable with the procedures in Subsections~\ref{SUB:weightalgo} and~\ref{SUB:Shapley}.}
Finally, we have
\[
    w^*_i = \frac{1}{\lvert{\mathcal{P}\rvert}} \sum_{y = 1}^n P(y, i) / y.
\]

\section{Simulation study} \label{SEC:simulations}

To gain further insights on our fixed-weight solution $\phi^*$, we compare it numerically to the simple benchmark in which each agent is liable only for the loss they directly cause to their successor.
This is the natural status quo in the absence of coordination via a common contract:
\[
    \widehat\phi_i(P,\ell) =
    \begin{cases}
        \ell(ij) & \text{if } ij \in E_P  \\
        0 & \text{otherwise.}
    \end{cases}
\]

We construct a tiered supply-chain network with five production layers and a sixth consumer layer.
Links are mainly from one layer to the next, but we also allow ``shortcuts'' that skip one layer.
The network has an ``hourglass'' structure with many suppliers feeding into a smaller set of intermediaries before distributing to a larger set of consumers.
Specifically, the six-layer network has 30, 20, 15, 10, 15, and 20 nodes per layer and its edges are randomly generated.
For each pair of nodes in layers $\ell$ and $\ell + 1$, there is a link with probability $40\%$;
for nodes in layers $\ell$ and $\ell + 2$, there is a link with probability $10\%$.
Sinks are in the consumer layer only, and each sink is reachable from some source (base-layer) node.

Although the graph no longer has a unique source node, extending $\phi^*$ to this case is straightforward.
We treat each of the 30 base-layer nodes as a source, one at a time.
For a given source, we identify the subgraph reachable from this node and apply our path-counting algorithm on this subgraph.
We then draw a loss function at random (independently and uniformly from $0$ to $100$ for each edge).
We compute the efficient cost, our fixed-weight solution~$\phi^*$, and the ``local-liability'' rule $\widehat\phi$.
This is repeated for $10\,000$ randomly generated loss functions before proceeding to the next source node.
Liabilities are averaged over all $30 \cdot 10\,000$ instances.
The data is summarized in Table~\ref{TAB:sims}.

\begin{table}[!htb]
    \centering
    \begin{tabular}{ll p{1.5cm} p{1.5cm} p{1.5cm} p{1.5cm}} \toprule
        & & \multicolumn{2}{l}{Average liabilities} & \multicolumn{2}{l}{Squared liabilities} \\ 
        Layer & Agents & $\phi^*$ & $\widehat\phi$ & $\phi^*$ & $\widehat\phi$ \\ \midrule
        0 & 30 & 0.26 & 0.34 & 2.43 & 6.77 \\
        1 & 20 & 0.38 & 0.6 & 0.58 & 16.47 \\
        2 & 15 & 0.5 & 0.99 & 0.45 & 29.5 \\
        3 & 10 & 0.74 & 0.91 & 0.69 & 19.68 \\
        4 & 15 & 0.5 & 0.71 & 0.34 & 17.44 \\ \midrule
        All & 90 & 0.42 & 0.63 & 1.15 & 15.93 \\ \bottomrule
    \end{tabular}
    \caption{Summary statistics of simulated liabilities under the fixed-weight rule $\phi^*$ and the local-liability benchmark $\widehat{\phi}$.}
    \label{TAB:sims}
\end{table}

Given that the network is relatively small and the losses are randomly generated, we focus on qualitative insights.
First, as expected, the efficient liabilities under $\phi^*$ do better in terms of total losses incurred;
our results suggests that $\widehat\phi$ would increase total losses by about $50\%$.
This is partly explained by $\phi^*$ inducing slightly shorter paths (by taking more shortcuts):
the average path length is roughly $10\%$ longer under the alternative rule.
Looking at the distribution across agents, we find that $\phi^*$ generates a more equal liability profile (as measured by the Gini coefficient), consistent with the idea that fixed weights encode a form of solidarity.
If we instead look at the average squared liability, every agent is better off under our fixed-weight solution.
This points to an insurance-like aspect of the rule:
each agent always pays a little;
for contrast, liabilities exhibit much larger variation under the local-liability rule.

\begin{figure}
\centering
\begin{tikzpicture}
\begin{axis}[
    width=0.85\textwidth,
    height=6cm,
    xlabel={Liability},
    ylabel={Density},
    ytick=\empty,
    legend style={at={(0.97,0.97)},anchor=north east},
    axis lines=left,
]

\addplot[
    thick,
]
table[x=x,y=shapley,col sep=comma]{kde.csv};
\addlegendentry{$\phi^*$}

\addplot[
    thick,
    dashed,
]
table[x=x,y=greedy,col sep=comma]{kde.csv};
\addlegendentry{$\widehat{\phi}$}

\end{axis}
\end{tikzpicture}
\caption{Kernel density estimates of individual liabilities under the fixed-weight rule $\phi^*$ and the local-liability benchmark $\widehat{\phi}$.}
\label{FIG:kde}
\end{figure}

Still, not all agents are better off.
Out of the 90 non-sink agents, 74 are better off under~$\phi^*$.
Intuitively, a central distribution center with many outgoing edges will be part of many paths and therefore be assigned a high weight under our rule.
By contrast, under $\widehat\phi$, more outgoing edges is beneficial:
the agent only covers the loss of the canceled edge, and this can be done cheaply if there are many options.
A simple way to illustrate the difference is to fix the node's total degree (that is, sum of in- and out-degree).
Under $\widehat\phi$, liability increases in in-degree and decreases in out-degree;
under our fixed-weight solution $\phi^*$, liability instead increases most strongly in the \emph{product} of in- and out-degree.

\section{Discussion} \label{SEC:discussion}

We briefly sketch two natural generalizations. 
In both cases, the main insights from our analysis remain intact under the broader modeling framework.

\subsection{Incomplete information on losses}

In our model, the loss function is common knowledge.
In practice, this assumption may be restrictive.
For instance, agent relationships may be governed by bilateral contracts under which each agent knows only the losses associated with their own contracts (edges).
Nevertheless, our results are robust under this more realistic specification of incomplete information.
To see this, suppose that the loss on edge $ij$ is private information known only to agents~$i$ and~$j$.
Consider the following two-stage procedure.
First, all agents report losses for the edges they know.
Subsequently, agents make cancellation choices as in Section~\ref{SEC:model} but with information on the reported losses instead.
Under a fixed-weight rule, reports will be truthful:
no agent can gain from misreporting edge losses, as doing so can only distort others' efficient decisions and increase total losses and, in turn, the agent's own liability.

In similar spirit to how smart contracts can implement the liability rule, they can be extended to make this two-stage procedure operational as well.
That is, the contract would act as a coordination device that collects reports, registers agent actions, and assigns liability;
see, for instance, \citet{GudmundssonHougaardKo2023} for a sketch of such a setup.

\subsection{Generalized cancellation games}

Fixed-weight rules are very robust and ensure efficient implementation in general.
To illustrate, consider the following sequential-move game.
We maintain the assumption of a directed acyclic graph $(N,E)$ with a unique source $s$ but allow for more general forms of cancellation structures:
\begin{enumerate}
    \item The source $s$ chooses an action $a_s$ from a given set of actions $A_s$.
    So far, $A_s$ has been the agents adjacent to the source;
    now, it can be far more general.
    For instance, an action could represent a subset of adjacent agents (in case $s$ has to cancel multiple edges) or combine an edge with a ``degree of cancellation'' (e.g., capturing a partial breakdown), and more.
    The choice $a_s$ determines the next agent, $i_1 = f(a_s) > s$ (hence, even if the choice represents multiple affected agents, we still proceed sequentially).
    \item Agent $i_1$ chooses an action $a_1$ from some action set $A_1(a_s)$.
    The available actions may depend on the source's choice:
    for instance, if $a_s, a'_s \in A_s$ capture different cancellations decisions, then we may have $A_1(a_s) \neq A_1(a'_s)$ even if $f(a_s) = f(a'_s)$.
    Jointly, the choices $a_s, a_1$ determine the next agent, $i_2 = f(a_s, a_1) > i_1$.
    \item [$\vdots$]
    \item [$k$.] In general, agent $i_k$ chooses an action $a_k$ from some action set $A_k(a_s, \dots, a_{k-1})$.
    All choices up to this point determine the next agent, $i_{k+1} = f(a_s, a_1, \dots, a_k) > i_k$.
\end{enumerate}
As there is a finite number of agents, the game ends in a finite number of steps.
The non-negative total losses generated through this realization of the game is a function of all choices $a_s, a_1, \dots, a_t$.
As long as these losses are shared according to a fixed-weight rule, equilibrium choices will be efficient (redefined to fit the new game structure).

\section{Concluding remarks} \label{SEC:concluding}

We have studied liability assignment on networks where agents' actions jointly determine both efficiency and loss allocation.
Our axiomatic analysis has singled out the class of fixed-weight rules, which achieve efficient implementation using balanced, scale-invariant transfers that ensure consistent liability across efficient outcomes.
We propose a particular fixed-weight rule, justified in part by its connection to cooperative game theory.
Specifically, the associated weights result from application of the Shapley value to a related cooperative game.
In this regard, we have emphasized the computational features that are specific to our particular model.
Finally, we have argued that fixed-weight rules in general are robust;
even in broader domains, they can be used to implement efficient outcomes via simple coordination devices. 
This suggests that fixed-weight rules provide a tractable and transparent approach to allocating responsibility for cascading network failures.

\bibliography{bibliography}
\bibliographystyle{abbrvnat}

\begin{appendix}

\section{Proofs} \label{APP:proofs}
\setcounter{proposition}{0}
\setcounter{theorem}{0}
\setcounter{observation}{0}

\begin{proposition}[Downstream monotonicity] 
    Let $\phi$ satisfy \textit{realized-loss dependence} and \textit{efficient implementation}.
    Consider an arbitrary loss function $\ell\in\mathcal{L}$, agent $i \in N$, and subpath $P \in \mathcal{P}^-_i$.
    For all agents $j, k \in N^+_i$ and efficient continuation subpaths $P_j\in \mathcal{E}_j(\ell)$ and $P_k\in \mathcal{E}_k(\ell)$, 
    \[
        \ell(ij) + L_j < \ell(ik) + L_k 
        \, \iff \,
        \phi_{i}(P \oplus ij \oplus P_j,\ell) < \phi_{i}(P \oplus ik \oplus P_k,\ell).
    \]
\end{proposition}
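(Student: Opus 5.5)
The plan is to reduce to a modified loss function in which the history $P$ is forced and only the two compared continuations matter. Then \textit{efficient implementation} pins down $i$'s choice, and \textit{realized-loss dependence} transfers the resulting liability comparison back to~$\ell$. Write $Q_j = P \oplus ij \oplus P_j$ and $Q_k = P \oplus ik \oplus P_k$. Let $U = E_{Q_j} \cup E_{Q_k}$, and define $\ell'$ by $\ell'(e) = \ell(e)$ for $e \in U$ and $\ell'(e) = M$ otherwise, with $M > \ell(Q_j) + \ell(Q_k)$. By \textit{realized-loss dependence}, $\phi(Q_j,\ell') = \phi(Q_j,\ell)$ and $\phi(Q_k,\ell') = \phi(Q_k,\ell)$, so it suffices to prove the equivalence under $\ell'$.

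First I record the structure of $\ell'$. Every path that leaves $U$ costs at least $M$, so every candidate efficient path under $\ell'$ starts with $P$ and then uses only edges in $U$. Such paths may mix segments of $P_j$ and $P_k$ where these reconverge. However, any continuation from $j$ that uses only $U$-edges is an ordinary $j$-to-sink path, so its $\ell'$-cost equals its $\ell$-cost and is at least $L_j$. Hence the cheapest continuation cost after $j$ under $\ell'$ is still $L_j$, attained by $P_j$, and likewise $L_k$ after $k$. In particular, $\ell'(ij) + L_j < \ell'(ik) + L_k$ holds exactly when $\ell(ij) + L_j < \ell(ik) + L_k$.

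The argument is organised as an induction on agents in reverse topological order. The inductive claim is the proposition itself together with its corollary: for every agent after~$i$, equilibrium choices in every subgame minimise $\ell(\cdot) + L_{(\cdot)}$. Given the claim for all later agents, in the subgame following $P \oplus ik$ every step of $P_k$ is a minimiser for the acting agent. Resolving ties toward $P_k$, together with existence of SPE in finite perfect-information games, yields an SPE of $\mathcal{G}(\ell';\phi)$ whose continuation after $ik$ is exactly $P_k$, and similarly one whose continuation after $ij$ is $P_j$. Agents before $i$ on $P$ never want to leave $P$: any deviation leads only to costly paths, and \textit{efficient implementation} forces outcomes to lie in $\mathcal{E}(\ell')$.

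($\Rightarrow$) Suppose $\ell(ij)+L_j<\ell(ik)+L_k$ but $\phi_i(Q_j,\ell')\ge\phi_i(Q_k,\ell')$. Take the SPE continuations fixed above and let $i$ choose $k$ at history~$P$. This is a best response, and it yields an SPE with outcome $Q_k \notin \mathcal{E}(\ell')$, contradicting \textit{efficient implementation}.

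($\Leftarrow$) The contrapositive splits into two cases. The strict case $\ell(ij)+L_j>\ell(ik)+L_k$ is the previous argument with $j$ and $k$ swapped. In the tie case, both $Q_j$ and $Q_k$ are efficient under~$\ell'$, so both must be SPE outcomes. If $\phi_i(Q_j)<\phi_i(Q_k)$, then in an SPE with outcome $Q_k$ we may, by the induction step, take the continuation after $ij$ to be $P_j$. Then $i$ would profitably deviate to~$j$, which is a contradiction.

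The main obstacle is this last sub-step. An SPE with outcome $Q_k$ need not prescribe continuation $P_j$ after $ij$; it may prescribe a different tied continuation, and that path could give $i$ a different liability. I would handle this by strengthening the inductive claim to say that, under $\ell'$, $i$'s liability is the same across all minimum-cost continuations from a given successor. I expect this to follow from applying the proposition at the later agents, but it is the step most likely to need extra care. If it fails, a fallback is to perturb $\ell'$ slightly on the non-$P_j$ edges of~$U$ so that $P_j$ becomes the unique cheapest continuation from $j$. That perturbation changes realized losses on $Q_k$, so I would then apply \textit{realized-loss dependence} only to $Q_j$ and use a limiting or ordering argument for $Q_k$.
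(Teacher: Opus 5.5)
\textbf{The flagged sub-step is false, and with it the tie case of the statement.} Your reduction is the paper's: its $\tilde\ell$ is your $\ell'$. The paper then reads both liability comparisons directly off membership in $\mathcal{S}(\tilde\ell;\phi)$. That is exactly the inference you were worried about: it never controls which continuation the supporting equilibrium prescribes after the successor $i$ does not choose. Here is a counterexample.
\begin{itemize}
\item \emph{Graph.} Take edges $s\to e\to t$, $s\to i$, $i\to j$, $i\to k$, $j\to a$, $k\to a$, $a\to b\to t$, $a\to c\to t$. Assumption~1 holds.
\item \emph{Rule.} Let $h(z)=z$ for $z<1$ and $h(z)=z+1$ for $z>1$. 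Set $h(1)=1$ on the two paths through $b$ and $h(1)=2$ on the two through $c$. Put $\phi_s(R,\ell)=\ell(R)/4$ on every path. On the four paths through $i$, put $\phi_a(R,\ell)=\ell(R)/4$ and $\phi_i(R,\ell)=h(\ell(R))/4$. Give the (nonnegative) remainder to $t$.
\item \emph{Axioms.} \textit{Realized-loss dependence} is immediate. For \textit{efficient implementation}: $a$ chooses efficiently and is indifferent at ties. Each version of $h$ is strictly increasing, and the larger value at $z$ lies below the smaller value at any $z'>z$. So the equilibrium continuations from $i$ are exactly the efficient ones (at a full tie, let $a$ choose the same branch after $j$ and after $k$). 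Agent $s$ then just compares total losses.
\item \emph{Failure.} Let $\ell(ij)=\ell(ik)=\ell(ab)=\ell(ac)=1/2$, $\ell(si)=\ell(ja)=\ell(ka)=\ell(bt)=\ell(ct)=0$ and $\ell(se)=2$. Then $\ell(ij)+L_j=\ell(ik)+L_k$. Yet $\phi_i=1/4$ on $(s,i,j,a,b,t)$, while $\phi_i=1/2$ on both $(s,i,k,a,c,t)$ and $(s,i,j,a,c,t)$.
\end{itemize}
So ``$\Leftarrow$'' fails. Your strengthened invariance claim fails already for the two efficient continuations from $j$. The perturbation fallback has nothing to work with, since no continuity is available and the jump sits exactly at the tie.

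\textbf{Consequence for the strict direction, and how to fix it.} Your induction hypothesis is the full proposition, which is false in general, so the strict direction cannot be run as you set it up either. What that direction actually needs is that a prescribed efficient continuation can be supported in a subgame. This follows from the two axioms directly, without induction. For a history $H$ ending at $d$, keep the losses on $E_H$ and on every edge whose tail is reachable from $d$, and make all other edges prohibitively expensive. Then:
\begin{itemize}
\item by \textit{realized-loss dependence}, the subgame at $H$ is unchanged;
\item the efficient paths become exactly the paths $H\oplus Q$ with $Q\in\mathcal{E}_d$;
\item \textit{efficient implementation} then gives both that every such $Q$ is an equilibrium continuation of that subgame, and that every equilibrium continuation of it is efficient (an equilibrium of the whole game that leaves $H$ uses an expensive edge).
\end{itemize}
Now suppose $\phi_i(Q_j,\ell')\ge\phi_i(Q_k,\ell')$. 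Support $P_j$ after $ij$, $P_k$ after $ik$, and any equilibrium after the other successors. Agent $i$ then has a best response other than $j$: either $k$, or a successor outside $\{j,k\}$ whose edge costs $M$. Either way the subgame at $P$ has an inefficient equilibrium continuation, a contradiction.

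This argument also closes two loose ends in your write-up. You ignored $i$'s other successors, and you do not need to show that earlier agents stay on $P$. It yields ``$\Rightarrow$'' and, with $j$ and $k$ swapped, the strict half of ``$\Leftarrow$''. That is all robust efficient implementation requires.
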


\begin{proof}
    Let $\phi$ satisfy \textit{efficient implementation} and \textit{realized-loss dependence}.
	Fix a loss function~$\ell$, a history ${P} \in \mathcal{P}^-_i$ ending with agent~$i\in N$, two $i$-successors $j,k\in N_i^+$, and two (sub-) paths $P_j\in\mathcal{E}_j(\ell)$ and $P_k \in \mathcal{E}_{k}(\ell)$.
	Let $\olsi{P}_j \equiv P \oplus ij \oplus P_j\in\mathcal{P}$ and $\olsi{P}_k \equiv P \oplus ik \oplus P_k\in\mathcal{P}$.
	We wish to show that
	\[
        \ell(ij)+L_j < \ell(ik)+L_k \iff \phi_{i}(\olsi{P}_j,\ell) < \phi_{i}(\olsi{P}_k,\ell).
    \]
    Begin by defining the set of all paths that can be formed using edges only from $\olsi{P}_j$ and $\olsi{P}_k$:
    \[
    \mathcal{O}(\olsi{P}_j,\olsi{P}_k) \equiv \left\{\wtsi{P}\in\mathcal{P}: E_{\wtsi{P}}\subset (E_{\olsi{P}_j}\cup E_{\olsi{P}_k}) \right\}.
    \]
    Now, notice that any path $\wtsi{P}\in \mathcal{O}(\olsi{P}_j,\olsi{P}_k)$ necessarily begins with $P$ and is immediately followed by either $ij$ or $ik$.
    Therefore, the lowest loss of any such path is
    \[
        \min_{\wtsi{P}\in\mathcal{O}(\olsi{P}_j,\olsi{P}_k)} \ell(\wtsi{P})=
        \ell(P)+\min \left\{ \ell(ij)+L_j,\ell(ik)+L_k \right\} = 
        \min\{\ell(\olsi{P}_j),\ell(\olsi{P}_k)\}.
    \]
    So, for all paths $\wtsi{P}\in\mathcal{O}(\olsi{P}_j,\olsi{P}_k)$ we have that $\ell(\wtsi{P})\geq 
    \min\{\ell(\olsi{P}_j),\ell(\olsi{P}_k)\}$.
    Define also a new loss function $\tilde\ell \in \mathcal{L}$ through 
    \[
        \tilde\ell(e) 
        \equiv
            \begin{cases}
                \ell(e) & \text{if }  e \in  \olsi{P}_j\cup\olsi{P}_k\\
                \ell(\olsi{P}_j) + 1 & \text{otherwise.}
	      \end{cases}
    \]

    ``$\Rightarrow$'':
 Assume that $\ell(ij)+L_j<\ell(ik)+L_k$ or, equivalently, that $\ell(\olsi{P}_j) < \ell(\olsi{P}_k)$.
    From the construction of $\tilde\ell$, this assumption means that $ \tilde \ell(\olsi{P}_j)\leq\tilde \ell(\wtsi{P})$ for every $\wtsi{P}\in \mathcal{P}$ and thus $\olsi{P}_j \in \mathcal{E}(\tilde\ell)$, whereas $ \olsi{P}_k \notin \mathcal{E}(\tilde\ell)$.
    By \textit{efficient implementation}, $\olsi{P}_j \in \mathcal{S}(\tilde\ell;\phi)$ and $\olsi{P}_k \notin \mathcal{S}(\tilde\ell;\phi)$, which jointly imply $\phi_i(\olsi{P}_j, \tilde\ell) < \phi_i(\olsi{P}_k ,\tilde\ell)$.
    By \textit{realized-loss dependence}, $\phi_i(\olsi{P}_j, \tilde\ell) = \phi_i(\olsi{P}_j, \ell)$ and $\phi_i(\olsi{P}_k, \tilde\ell) = \phi_i(\olsi{P}_k, \ell)$. 
    Hence, $\phi_i(\olsi{P}_j, \ell) < \phi_i(\olsi{P}_k, \ell)$.

    ``$\Leftarrow$'':
  Assume that $\phi_i(\olsi{P}_j, \ell)<\phi_i(\olsi{P}_k, \ell)$ and consider the game $\mathcal{G}(\tilde\ell;\phi)$.
    By way of contradiction, say that $\ell(ij) +L_j \geq \ell(ik) + L_k$ or, equivalently, that $\ell(\olsi{P}_j) \geq \ell(\olsi{P}_k)$.
  Then, from the construction of $\tilde{\ell}$, $\tilde{\ell}(\olsi{P}_j) \geq \tilde\ell(\olsi{P}_k)$.
  This implies $\olsi{P}_k \in \mathcal{E}(\tilde\ell)$, since for every path $\wtsi{P} \in \mathcal{P}$, we have $\tilde\ell(\wtsi{P}) \geq \tilde\ell(\olsi{P}_k)$.
    Then, by \textit{efficient implementation}, we have $\olsi{P}_k \in \mathcal{S}(\tilde\ell;\phi)$, which implies $\phi_i(\olsi{P}_j, \tilde\ell) \geq \phi_i(\olsi{P}_k, \tilde\ell)$.
    Hence, by \textit{realized-loss dependence}, $\phi_i(\olsi{P}_j, \ell) \geq \phi_i(\olsi{P}_k, \ell)$, contradicting our assumption that $\phi_i(\olsi{P}_j,\ell) < \phi_i(\olsi{P}_k, \ell)$.
    So, it has to be that $\ell(ij) +L_j < \ell(ik) + L_k$.
\end{proof}

\begin{proposition}[Efficient-path invariance] 
    Let $\phi$ satisfy \textit{efficient implementation} and \textit{pairwise collusion-proofness}.
    For each loss function $\ell \in \mathcal{L}$ and all efficient paths $P,P' \in \mathcal{E}(\ell)$, 
    \[
        \phi(P,\ell) = \phi(P',\ell).
    \]
\end{proposition}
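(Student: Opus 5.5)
The plan is to compare $P$ and $P'$ at the first agent where they split, and to get the needed comparisons by building equilibria from parts of other equilibria. Fix $\ell$ and distinct $P,P'\in\mathcal{E}(\ell)$, so $\ell(P)=\ell(P')$. Both paths start at $s$, so they share a maximal common initial subpath $Q\in\mathcal{P}^-_i$ ending at some agent $i$. There $P$ continues with $ij$ and $P'$ with $ij'$, where $j\neq j'$. By \textit{efficient implementation}, both $P$ and $P'$ belong to $\mathcal{S}(\ell;\phi)$. Let $\sigma$ be a subgame-perfect equilibrium with $\Pi(\sigma)=P$ and $\sigma'$ one with $\Pi(\sigma')=P'$.

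\emph{Step 1 (splicing).} Let $\hat\sigma$ agree with $\sigma$ everywhere except at histories extending $Q\oplus ij'$, where it follows $\sigma'$. The subgame after $Q\oplus ij'$ is a separate game of perfect information, and $\sigma'$ is an equilibrium of it that reaches $P'$. So $\hat\sigma$ is optimal in every subgame strictly below $i$. At nodes before $i$ on $Q$, nothing on the path changes and no off-path alternative changes, so those choices stay optimal. The only new condition is at $i$ after $Q$, where choosing $j$ must still be a best reply; the alternative $j'$ now yields $P'$. Hence $\hat\sigma$ is an equilibrium with outcome $P$ whenever $\phi_i(P,\ell)\le\phi_i(P',\ell)$. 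The symmetric construction, using $\sigma'$ and inserting $\sigma$ after $Q\oplus ij$, works when $\phi_i(P',\ell)\le\phi_i(P,\ell)$. Relabelling if needed, assume the first case.

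\emph{Step 2 (collusion).} In $\hat\sigma$, let $i$ deviate by choosing $j'$ after $Q$. This produces exactly $P'=\Pi((\sigma'_i,\hat\sigma_{-i}))$. \textit{Pairwise collusion-proofness} then gives, for every agent $k$,
\[
\phi_i(P,\ell)+\phi_k(P,\ell)\le\phi_i(P',\ell)+\phi_k(P',\ell).
\]
Write $d=\phi_i(P',\ell)-\phi_i(P,\ell)\ge 0$. Then $\phi_k(P,\ell)-\phi_k(P',\ell)\le d$ for all $k$. Once we know $d=0$, every $k$ satisfies $\phi_k(P,\ell)\le\phi_k(P',\ell)$. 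Balance gives $\sum_k\phi_k(P,\ell)=\ell(P)=\ell(P')=\sum_k\phi_k(P',\ell)$. Non-positive differences with zero sum must all be zero, so $\phi(P,\ell)=\phi(P',\ell)$.

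\emph{Step 3 (the obstacle: showing $d=0$).} Suppose instead $\phi_i(P,\ell)<\phi_i(P',\ell)$. Then $P'$ is still an equilibrium outcome, realised by $\sigma'$. In $\sigma'$, agent $i$ picks $j'$ at $Q$, and the $\sigma'$-continuation through $j$, call it $R$, satisfies $\phi_i(R,\ell)\ge\phi_i(P',\ell)>\phi_i(P,\ell)$. I would first replace the continuation of $\sigma'$ after $Q\oplus ij$ by that of $\sigma$, which again preserves subgame perfection below $i$. The aim is to show that this leaves $i$ wanting to switch to $j$, while $P'$ must remain reachable as an outcome. The delicate part is controlling the knock-on effects at predecessors on $Q$, so that one contradicts $P'\in\mathcal{S}(\ell;\phi)$ rather than merely changing which equilibrium realises $P'$.

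If this direct route stalls, I would fall back on a second use of \textit{pairwise collusion-proofness}. Take the spliced profile from $\sigma'$ in which the continuation after $j$ is the $\sigma$-continuation; if it is an equilibrium reaching $P'$, then $i$'s deviation to $j$ yields $P$. Applying the axiom with $k=i$ gives $\phi_i(P',\ell)\le\phi_i(P,\ell)$, and combined with Step 2 this forces $d=0$. I expect establishing that this second spliced profile is an equilibrium, or otherwise ruling out $d>0$, to be the main difficulty of the proof.
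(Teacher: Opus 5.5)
\textbf{Verdict: the proposal has a genuine gap at Step~3, and that gap cannot be closed under the stated hypotheses.}

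Your Steps~1 and~2 are sound. Step~1 even supplies something the paper leaves implicit, namely an equilibrium realizing $P$ in which $i$'s deviation lands exactly on $P'$. But everything hinges on Step~3, which you leave open, and your fallback is circular: the profile built from $\sigma'$ by inserting $\sigma$'s continuation after $Q\oplus ij$ is an equilibrium only if $\phi_i(P',\ell)\le\phi_i(P,\ell)$, i.e.\ only if $d=0$.

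The paper settles this step in one sentence: if $i$ strictly preferred $P$ to $P'$, then $P'$ could not be an equilibrium path. Your suspicion of that inference is justified. In an equilibrium realizing $P'$, a switch by $i$ to $j$ triggers that equilibrium's own continuation after $Q\oplus ij$. That continuation can be a punishing, even inefficient, path rather than the tail of $P$.

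In fact Step~3 cannot be closed from \textit{efficient implementation} and \textit{pairwise collusion-proofness} alone, and the statement as written fails. Here is a counterexample.
\begin{itemize}
\item Take edges $s\to a$, $s\to b$, $a\to t$, $a\to c$, $c\to t$, $b\to t$ (no bottlenecks). The paths are $A=(s,a,t)$, $B=(s,a,c,t)$, $C=(s,b,t)$.
\item Let $\ell^\circ$ put loss $0$ on $sa$ and $sb$ and loss $1$ on every other edge, so $\mathcal{E}(\ell^\circ)=\{A,C\}$.
\item At every $\ell\neq\ell^\circ$, let $\phi$ be the fixed-weight rule with $w_s=w_a=\tfrac12$.
\item At $\ell^\circ$, let $t$ pay $1$ on $A$; let $s$ and $t$ pay $\tfrac12$ each on $C$; let $s$ and $t$ pay $1$ each on $B$; all others pay $0$.
\end{itemize}
Agent $a$ is indifferent between $A$ and $B$. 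If $a$ continues to $t$, the source picks $a$ (outcome $A$); if $a$ continues to $c$, the source picks $b$ (outcome $C$). So $\mathcal{S}(\ell^\circ;\phi)=\mathcal{E}(\ell^\circ)$.

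The only deviations that change the path are these three:
\begin{itemize}
\item $A\to C$ by $s$: the binding pair is $(s,t)$, with $0+1\le\tfrac12+\tfrac12$.
\item $A\to B$ by $a$: $\phi_a=0$ on both, and $\phi(A,\ell^\circ)\le\phi(B,\ell^\circ)$ componentwise.
\item $C\to B$ by $s$: $\phi(C,\ell^\circ)\le\phi(B,\ell^\circ)$ componentwise.
\end{itemize}
So pairwise collusion-proofness holds at $\ell^\circ$. Both axioms constrain each $\phi(\cdot,\ell)$ separately, so $\phi$ satisfies them. Yet $\phi_s(A,\ell^\circ)=0<\tfrac12=\phi_s(C,\ell^\circ)$. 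Here $d>0$, and $C$ is sustained only by $a$'s off-path threat to play $B$.

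\textbf{What is missing is \textit{realized-loss dependence}.} The paper does assume it wherever it invokes this result (Propositions~\ref{PR:redistribution}--\ref{PR:pathindependence} and Theorem~\ref{TH:fixedweight}). With it, the proof goes through as follows.

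First, every equilibrium continuation, on or off the path, is efficient (Corollary~\ref{CO:eff}). This can be checked directly:
\begin{itemize}
\item Fix a history $H$ ending at $x$. Give a prohibitive loss to every edge that neither lies on $H$ nor leaves a node reachable from $x$. By \textit{realized-loss dependence}, this leaves the subgame after $H$ unchanged.
\item Splice the given continuation into any equilibrium of the modified game, and re-solve the proper prefixes of $H$ by backward induction.
\item \textit{Efficient implementation} forces the resulting path to follow $H$, so the continuation is efficient.
\end{itemize}

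Second, argue by downward induction on the length of the common prefix $Q$:
\begin{itemize}
\item In any equilibrium $\sigma$ realizing $P$, the switch by $i$ to $j'$ yields an efficient path $P''=Q\oplus ij'\oplus R''$.
\item $P''$ shares the longer prefix $Q\oplus ij'$ with $P'$, so $\phi(P'',\ell)=\phi(P',\ell)$ by the induction hypothesis.
\item The equilibrium condition then gives $\phi_i(P,\ell)\le\phi_i(P',\ell)$, and the symmetric argument gives the reverse, so $d=0$.
\item \textit{Pairwise collusion-proofness} applied to this deviation, together with balance (your Step~2), completes the proof.
\end{itemize}
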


\begin{proof}
    Consider two efficient paths $P,P' \in \mathcal{P}$ and let agent $i \in N$ be the first to make a different choice in $P$ and~$P'$.
    Then $i$ must be assigned the same liability under the two:
    if not, then $i$ would prefer one to the other, say $P$ to $P'$, and $P'$ would not be an equilibrium path.
    This would contradict \textit{efficient implementation}.
    Hence, $\phi_i(P,\ell) = \phi_i(P',\ell)$.
    By \textit{pairwise collusion-proofness} applied from $P$ to $P'$ and from $P'$ to $P$, we have $\phi_j(P,\ell) = \phi_j(P',\ell)$ for all agents $j \neq i$.
    Hence, $\phi(P,\ell) = \phi(P',\ell)$.
\end{proof}

Lemma~\ref{LE:irreducible} will be used repeatedly in later proofs.
It asserts that, fixing the losses on a particular path, we can construct a new loss function under which all paths are efficient.

\begin{lemma}[Irreducible losses given path] \label{LE:irreducible}
    For each path $P \in \mathcal{P}$ and loss function $\ell \in \mathcal{L}$, there exists a loss function $\hat\ell \in \mathcal{L}$ with $\mathcal{E}(\hat\ell) = \mathcal{P}$ and $\hat\ell(e) = \ell(e)$ for all edges $e \in E_P$.
\end{lemma}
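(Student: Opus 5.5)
The plan is to build $\hat\ell$ from a \emph{potential} on nodes. I would construct $\pi \colon N \to \mathbb{R}_{\geq 0}$ with three properties: $\pi(s) = 0$; $\pi(i) \leq \pi(j)$ whenever $i \to j$; and $\pi(t) = C \equiv \ell(P)$ for every sink $t$. I would then set $\hat\ell(ij) \equiv \pi(j) - \pi(i) \geq 0$.

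Any source-sink path then has telescoping total loss $\pi(t) - \pi(s) = C$. Hence all paths have the same loss and $\mathcal{E}(\hat\ell) = \mathcal{P}$. What remains is to choose $\pi$ so that it is monotone along every edge and reproduces $\ell$ on $E_P$.

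Write $P = (s = i_0 \to i_1 \to \dots \to i_m = t_P)$. For on-path nodes I would set $\pi(i_k) \equiv \ell(i_0 \to \dots \to i_k)$, the cumulative loss along $P$. This immediately gives $\hat\ell(i_{k-1} i_k) = \ell(i_{k-1} i_k)$ for every edge of $P$, and $\pi(t_P) = C$. For an off-path non-sink node $j$, I would set $\pi(j) \equiv \max\{\pi(a) \mid a \in N_P,\ a \text{ reaches } j\}$; this set is nonempty because $s$ reaches every node. For every off-path sink I would set $\pi(t) \equiv C$. All values then lie in $[0,C]$, since on-path values are prefix sums of nonnegative losses.

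The main step is checking monotonicity edge by edge, and acyclicity does the work. There are five cases.
\begin{enumerate}[label=(\roman*)]
    \item A chord $i_a \to i_b$ between on-path nodes: the edge and the $P$-segment from $i_b$ to $i_a$ would otherwise form a cycle, so $a < b$. Hence $\pi(i_a) \leq \pi(i_b)$ because prefix sums are nondecreasing.
    \item An edge $i \to j$ with both ends off-path and $j$ not a sink: every on-path ancestor of $i$ is an ancestor of $j$, so the maximum can only grow.
    \item An edge $a \to j$ with $a \in N_P$ and $j$ off-path: $a$ itself is among the on-path ancestors of $j$, so $\pi(j) \geq \pi(a)$.
    \item An edge $j \to i_k$ with $j$ off-path: any on-path ancestor $a = i_r$ of $j$ also reaches $i_k$. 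By acyclicity, $i_r$ must precede $i_k$ on $P$ (otherwise the $P$-segment from $i_k$ to $i_r$ together with $i_r \leadsto j \to i_k$ gives a cycle). Hence $\pi(a) \leq \pi(i_k)$, and taking the maximum, $\pi(j) \leq \pi(i_k)$.
    \item Any edge into a sink: $\pi(\text{sink}) = C \geq \pi(\cdot)$.
\end{enumerate}

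Case (iv) is the one I expect to need care; it is exactly where the ``max over on-path ancestors'' definition is designed to avoid violations.

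With monotonicity established, $\hat\ell \in \mathcal{L}$, $\hat\ell$ agrees with $\ell$ on $E_P$, and every path costs exactly $\ell(P)$. This proves the lemma.
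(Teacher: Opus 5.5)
Your proposal is correct and is essentially the paper's construction: a node potential equal to the prefix sums of $\ell$ along $P$, equal to $\ell(P)$ at every sink, and given by a maximum at off-path nodes, so that every source-sink path telescopes to $\ell(P)$. The paper defines the off-path values recursively, as the maximum over immediate predecessors in topological order, which produces the same numbers as your maximum over on-path ancestors; your case analysis also supplies the check $\hat\ell \geq 0$, which the paper asserts without verification.
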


\begin{proof}
    Fix path $P \in \mathcal{P}$ and loss function $\ell \in \mathcal{L}$.
    We construct an auxiliary variable $c \in \mathbb{R}^N$ as follows.
    For all sink agents $t \in N$, set $c_t = \ell(P)$.
    For the source, set $c_s = 0$.
    For each interior on-path agent $i \in N_P$, let $P^-_i \in \mathcal{P}^-_i$ denote the subpath $s \to \dots \to i$ of $P$ leading up to~$i$ and set $c_i = \ell(P^-_i)$.
    Finally, process off-path non-sink agents in topological order:
    for each agent $j \not \in N_P$, set $c_j = \max_{i \to j} c_i$.
    
    Construct loss function $\hat\ell \in \mathcal{L}$ such that $\hat\ell(ij) = c_j - c_i \geq 0$ for each edge $i \to j$.
    Hence, $\hat\ell$ matches $\ell$ on~$P$.
    Take an arbitrary path $\tilde{P} \in \mathcal{P}$ and label it $\tilde{P} = (s, 1, \dots, m, t)$.
    Then
    \[
        \hat\ell(\tilde{P}) 
        = \hat\ell(s,1) + \hat\ell(1,2) + \dots + \hat\ell(m,t)
        = (c_1 - 0) + (c_2 - c_1) + \dots + (c_{\tilde{t}} - c_m))
        = c_t
        = \ell(P).
    \]
    That is, every path has the same loss under $\hat\ell$.
    Thus, $\mathcal{E}(\hat\ell) = \mathcal{P}$.
\end{proof}

\begin{proposition}[Redistribution invariance] 
    Let $\phi$ satisfy \textit{efficient implementation}, \textit{realized-loss dependence}, and \textit{pairwise collusion-proofness}.
    For each path $P \in \mathcal{P}$ and all loss functions~${\ell,\ell' \in \mathcal{L}}$, 
    \[
        \ell(P) = \ell'(P)
        \, \implies \,
        \phi(P,\ell) = \phi(P,\ell').
    \]
\end{proposition}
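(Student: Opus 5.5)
The plan is to reduce an arbitrary redistribution along $P$ to a sequence of elementary moves, each of which changes a single ``potential'' at one interior node of~$P$. Each move is bridged by a comparison path $P'$ that avoids that node. Only losses on $P$ matter for $\phi(P,\cdot)$ by \textit{realized-loss dependence}, so we may freely redesign losses off~$P$. If $P$ has a single edge $s\to t$, then $\ell(P)=\ell'(P)$ already forces $\ell=\ell'$ on $E_P$, and \textit{realized-loss dependence} finishes the argument. So assume $P=(s,i_1,\dots,i_m,t)$ with $m\geq 1$.

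\emph{Encoding losses on $P$ by potentials.} Any loss on $E_P$ is described by a nondecreasing vector $c_s=0\leq c_{i_1}\leq\dots\leq c_{i_m}\leq c_t=\ell(P)$, where $c_{i_k}$ is the cumulative loss up to $i_k$ and each edge loss is a difference of consecutive potentials. Let $c$ and $c'$ be the vectors for $\ell$ and $\ell'$. They share the endpoints $c_s=0$ and $c_t=\ell(P)=\ell'(P)$ and differ only at interior nodes. I will pass from $c$ to $c'$ through a finite sequence of nondecreasing vectors $c=c^0,c^1,\dots,c^K=c'$, each differing from the previous one at a single interior node. Concretely, I first raise, from the last interior node backwards, every coordinate with $c'_{i_k}>c_{i_k}$ to its target. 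I then lower, from the first interior node forwards, every coordinate with $c'_{i_k}<c_{i_k}$. A short check shows that monotonicity along $P$ is preserved at each step, because the neighbour on the relevant side is already at a value compatible with the target.

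\emph{One elementary step.} Suppose $c^a$ and $c^{a+1}$ differ only at the interior node~$i$. By Assumption~\ref{A2}, pick a path $P'$ with $i\notin N_{P'}$. The nodes common to $P$ and $P'$ appear in the same order on both paths, since both are consistent with the topological order. Build a loss function $\ell^a$ as follows. On edges of $P\cup P'$, set losses equal to potential differences: use $c^a$ on the nodes of~$P$. Nodes of $P'$ off $P$ receive values interpolated monotonically between the potentials of the surrounding shared nodes, or between $0$ and the first shared node. Every sink reached by $P$ or $P'$ gets potential $\ell(P)$. Give every other edge a large loss $M>\ell(P)$.

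Then all losses are nonnegative. Every source–sink path built from edges of $P\cup P'$ costs exactly $\ell(P)$, and every other path costs more. Hence $P,P'\in\mathcal{E}(\ell^a)$, and Proposition~\ref{PR:EPI} gives $\phi(P,\ell^a)=\phi(P',\ell^a)$. Construct $\ell^{a+1}$ identically from $c^{a+1}$, using the same values off~$P$. Only edges incident to $i$ change, and none lie on $P'$. So \textit{realized-loss dependence} gives $\phi(P',\ell^a)=\phi(P',\ell^{a+1})$. Proposition~\ref{PR:EPI} again gives $\phi(P',\ell^{a+1})=\phi(P,\ell^{a+1})$.

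\emph{Closing the chain.} The loss function $\ell^0$ agrees with $\ell$ on $E_P$, and $\ell^K$ agrees with $\ell'$ on $E_P$. Chaining the elementary steps and applying \textit{realized-loss dependence} at both ends yields $\phi(P,\ell)=\phi(P,\ell')$.

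The main obstacle is keeping every intermediate loss function nonnegative while making both $P$ and $P'$ efficient. This is why I use a monotone ordering of the single-node moves and restrict the potential construction to $P\cup P'$, with prohibitive losses elsewhere, rather than using Lemma~\ref{LE:irreducible} on the whole graph. Off-path nodes adjacent to $i$ could otherwise create negative differences when $c_i$ moves. The remaining care is in verifying that the interpolation on $P'$ is consistent, which relies on the shared nodes appearing in the same order on both paths.
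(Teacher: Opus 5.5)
Your proof is correct and is essentially the paper's argument. The paper's elementary move, shifting $\delta$ from $ij$ to $jk$ on $P$, is exactly lowering your potential at the interior node $j$, and it is bridged the same way: a path $P'$ avoiding $j$ from Assumption~\ref{A2}, \textit{efficient-path invariance} on both sides, and \textit{realized-loss dependence} on $P'$. The only real differences are in execution. The paper builds its auxiliary loss by applying Lemma~\ref{LE:irreducible} and then adding $\delta$ to all of $j$'s outgoing edges; this weakly raises every path cost and never touches $P'$, which sidesteps the same issue you handle with prohibitive losses off $P\cup P'$. Your explicit monotone ordering of single-node moves also spells out what the paper's Part~II only asserts.

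Two cosmetic points. First, the source is always a shared node, so ``between $0$ and the first shared node'' is vacuous; the case that needs separate interpolation is the tail of $P'$ after its last shared node, anchored at the sink value $\ell(P)$. Second, the comparison path changes from step to step, so you also need \textit{realized-loss dependence} on $P$ between consecutive steps (to pass between the two auxiliary losses built from the same $c^{a+1}$), not only at the two ends.
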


\begin{proof}
    Let $\phi$ satisfy \textit{efficient implementation}, \textit{realized-loss dependence}, and \textit{pairwise collusion-proofness}.
    By Proposition~\ref{PR:EPI}, $\phi$ satisfies \textit{efficient-path invariance}.
    Fix a path $P \in \mathcal{P}$ and a loss function $\ell \in \mathcal{L}$.
    We wish to show that, for any loss function $\ell' \in \mathcal{L}$ that keeps total losses on~$P$ unchanged, so $\ell(P) = \ell'(P)$, we have $\phi(P,\ell) = \phi(P,\ell')$.
    To arrive at this point, we will move from $\ell$ to $\ell'$ in small steps:
    specifically, we will repeatedly shift losses only between two adjacent edges on path $P$.
    In \textsc{Part~I}, we show that this pairwise adjacent shift does not affect liabilities, going through a sequence of loss functions that keep $\ell(P)$ unchanged.
    In \textsc{Part~II} and~\textsc{III}, this conclusion is extended to any loss redistribution along path $P$.

    \vspace{.25\baselineskip}
    \textsc{Part I:}
    Consider two edges $ij, jk \in E_P$.
    Define loss function $\ell' \in \mathcal{L}$ by shifting some losses from $ij$ onto $jk$.
    Specifically, let $0 \leq \delta \leq \ell(ij)$ and set $\ell'(ij) = \ell(ij) - \delta$, $\ell'(jk) = \ell(jk) + \delta$, and $\ell'(e) = \ell(e)$ for $e \neq ij, jk$.
    We proceed by considering a sequence of loss functions that eventually gets us to $\ell'$.
    
    By Lemma~\ref{LE:irreducible}, there is a loss function $\ell^1 \in \mathcal{L}$ that matches $\ell$ on~$P$, so $\ell^1(e) = \ell(e)$ for all $e \in E_P$, such that $\mathcal{E}(\ell^1) = \mathcal{P}$.
    By \textit{realized-loss dependence}, $\phi(P,\ell) = \phi(P,\ell^1)$.
    By Assumption~\ref{A2}, there is a path $P' \in \mathcal{P}$ that does not go through agent~$j$;
    that is, $j\notin N_{P'}$.
    As $P,P' \in \mathcal{E}(\ell^1)$, by \textit{efficient-path invariance}, $\phi(P,\ell^1) = \phi(P',\ell^1)$.

    We define a new loss function $\ell^2 \in \mathcal{L}$ that matches $\ell^1$ everywhere except for the edges that involve~$j$.
    Specifically, the loss on the edge $ij$ decreases by $\delta$ whereas those of all $j$'s outgoing edges increase by the same amount:
    \[
        \ell^2(e) = \begin{cases}
            \ell^1(e) - \delta & \text{if $e = ij$,} \\
            \ell^1(e) + \delta & \text{if $e\in E_j$,} \\
            \ell^1(e) & \text{otherwise.} \\
        \end{cases} 
    \]
    As $P'$ does not go through~$j$, its losses are unchanged.
    By \textit{realized-loss dependence}, $\phi(P',\ell^1) = \phi(P',\ell^2)$. 
    Moreover, for each path $\tilde{P} \in \mathcal{P}$, $\ell^2(\tilde{P}) \geq \ell^1(\tilde{P})$:
    the only edge that is cheaper at $\ell^2$ than at $\ell^1$ is $ij$, but all paths that include $ij$ must also include one of $j$'s outgoing edges, which has gotten equally more expensive.
    The construction of $\ell^2$ implies $\ell^2(P) = \ell^1(P)$ and $\ell^2(P') = \ell^1(P')$, and since $P,P' \in \mathcal{E}(\ell^1)$, we have $P,P' \in \mathcal{E}(\ell^2)$.
    By \textit{efficient-path invariance}, $\phi(P',\ell^2) = \phi(P,\ell^2)$.

    Finally, we recover loss function $\ell'$ as $\ell'(e) = \ell^2(e)$ for $e \in E_P$ and $\ell'(e) = \ell(e)$ for $e \not \in E_P$.
    By \textit{realized-loss dependence}, $\phi(P,\ell^2) = \phi(P,\ell')$.
    In conclusion, $\ell(P) = \ell'(P)$ and
    \[
        \phi(P,\ell)
        = \phi(P,\ell^1)
        = \phi(P',\ell^1)
        = \phi(P',\ell^2)
        = \phi(P,\ell^2)
        = \phi(P,\ell').
    \]

    Here, we obtained $\ell'$ from $\ell$ by shifting losses from $ij$ onto $jk$.
    This direction is without loss of generality:
    we could also have started from $\ell'$, shifted losses onto $ij$ to get to $\ell$, and again concluded that $\phi(P,\ell) = \phi(P,\ell')$.

    \vspace{.25\baselineskip}
    \textsc{Part II:}
    Let loss function $\ell' \in \mathcal{L}$ be such that $\ell'(P) = \ell(P)$ and $\ell'(e) = \ell(e)$ for $e \not \in E_P$.
    That is, consider any redistribution on $P$, not limited to only pairs of adjacent edges. 
    We can define a sequence of loss functions $\ell = \ell_0, \ell_1, \dots, \ell_m = \ell'$ such that $\ell_{k+1} \in \mathcal{L}$ is obtained from $\ell_k \in \mathcal{L}$ through a pairwise adjacent on-path loss shift. 
    By repeatedly applying the conclusion from \textsc{Part~I}, $\phi(P,\ell) = \phi(P, \ell_0) = \dots = \phi(P, \ell_m) = \phi(P, \ell')$.

    \vspace{.25\baselineskip}
    \textsc{Part III:}
    Finally, let loss function $\ell' \in \mathcal{L}$ be such that $\ell'(P) = \ell(P)$. 
    Let loss function $\tilde\ell \in \mathcal{L}$ be such that $\tilde\ell(e) = \ell(e)$ for $e \not \in E_P$ and $\tilde\ell(e) = \ell'(e)$ for $e \in E_P$.
    By \textit{realized-loss dependence}, $\phi(P,\ell') = \phi(P,\tilde\ell)$.
    By \textsc{Part~II}, $\phi(P,\tilde\ell) = \phi(P,\ell)$.
    Hence, $\phi(P,\ell) = \phi(P,\ell')$.
\end{proof}

\begin{proposition}[Path independence] 
    Let~$\phi$ satisfy \textit{efficient implementation}, \textit{realized-loss dependence}, and \textit{pairwise collusion-proofness}.
    For all paths~$P,P' \in \mathcal{P}$ and each loss function~${\ell \in \mathcal{L}}$, 
    \[
        \ell(P) = \ell(P') 
        \, \implies \,
        \phi(P,\ell) = \phi(P',\ell).
    \]
\end{proposition}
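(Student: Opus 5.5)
The plan follows the sketch given after the statement in the main text. We route through an auxiliary loss function under which every path is efficient. Fix paths $P,P' \in \mathcal{P}$ and a loss function $\ell \in \mathcal{L}$ with $\ell(P) = \ell(P')$.

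\textbf{Step 1 (auxiliary loss function).} Apply Lemma~\ref{LE:irreducible} to the pair $(P,\ell)$. This yields $\hat\ell \in \mathcal{L}$ with $\hat\ell(e) = \ell(e)$ for all $e \in E_P$ and $\mathcal{E}(\hat\ell) = \mathcal{P}$. Since $\hat\ell$ agrees with $\ell$ on $E_P$, \textit{realized-loss dependence} gives $\phi(P,\ell) = \phi(P,\hat\ell)$.

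\textbf{Step 2 (move from $P$ to $P'$).} Both $P$ and $P'$ lie in $\mathcal{E}(\hat\ell)$. Proposition~\ref{PR:EPI} (efficient-path invariance) therefore gives $\phi(P,\hat\ell) = \phi(P',\hat\ell)$.

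\textbf{Step 3 (return from $\hat\ell$ to $\ell$ on $P'$).} We need the total loss on $P'$ to agree under the two loss functions. In the construction of Lemma~\ref{LE:irreducible} every path has the same loss, so $\hat\ell(P') = \hat\ell(P)$. Moreover $\hat\ell(P) = \ell(P)$ because the two functions agree on $E_P$, and $\ell(P) = \ell(P')$ by hypothesis. Hence $\hat\ell(P') = \ell(P')$.

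Proposition~\ref{PR:redistribution} (redistribution invariance), applied to the path $P'$ and the loss functions $\hat\ell$ and $\ell$, then gives $\phi(P',\hat\ell) = \phi(P',\ell)$. This proposition needs exactly the three axioms assumed in the statement.

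\textbf{Conclusion.} Chaining the equalities gives
\[
\phi(P,\ell) = \phi(P,\hat\ell) = \phi(P',\hat\ell) = \phi(P',\ell).
\]

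All substantive work is already contained in Lemma~\ref{LE:irreducible} and Proposition~\ref{PR:redistribution}, so no step is a real obstacle. The only point needing care is Step 3. There I will state explicitly that $\mathcal{E}(\hat\ell) = \mathcal{P}$ means all paths share the minimal loss, so $\hat\ell(P') = \hat\ell(P)$. This equal total is what allows Proposition~\ref{PR:redistribution} to be invoked.
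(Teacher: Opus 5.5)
Your proposal is correct and matches the paper's proof step for step. You use Lemma~\ref{LE:irreducible} on $(P,\ell)$ and \textit{realized-loss dependence} to get $\phi(P,\ell)=\phi(P,\hat\ell)$, then efficient-path invariance to move to $P'$, and finally Proposition~\ref{PR:redistribution} on $P'$, using $\hat\ell(P')=\hat\ell(P)=\ell(P)=\ell(P')$.
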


\begin{proof}
    Fix loss function $\ell \in \mathcal{L}$ and paths $P,P' \in \mathcal{P}$ such that $\ell(P) = \ell(P')$.
    By Lemma~\ref{LE:irreducible}, there is a loss function $\hat\ell \in \mathcal{L}$ that matches $\ell$ on $P$, so $\hat\ell(e) = \ell(e)$ for all edges $e \in E_P$, such that $\mathcal{E}(\hat\ell) = \mathcal{P}$.
    In particular, $\hat\ell(P) = \hat\ell(P')$.
    By \textit{realized-loss dependence}, $\phi(P,\ell) = \phi(P,\hat\ell)$.
    Moreover, $\hat\ell(P') = \hat\ell(P) = \ell(P) = \ell(P')$.
    That is, $\hat\ell$ is a redistribution along~$P'$;
    by Proposition~\ref{PR:redistribution}, $\phi(P',\ell) = \phi(P',\hat\ell)$.
    By Proposition~\ref{PR:EPI}, $\phi$ satisfies \textit{efficient-path invariance}, so $\phi(P,\hat\ell) = \phi(P',\hat\ell)$.
    Thus, $\phi(P,\ell) = \phi(P',\ell)$.
\end{proof}

\begin{theorem}[Characterization of fixed-weight rules] 
    A rule~$\phi$ satisfies \textit{efficient implementation}, \textit{realized-loss dependence}, \textit{paiwise collusion-proofness}, and \textit{scale invariance} if and only if $\phi = \phi^w$ for some $w \in \Delta^N_\star$.
\end{theorem}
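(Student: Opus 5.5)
The proof has two directions. The substantive one is necessity: the axioms force $\phi=\phi^w$ with $w\in\Delta^N_\star$. Sufficiency is a direct check on $\phi^w$.

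\emph{Sufficiency.} Take $w\in\Delta^N_\star$.
\begin{itemize}
\item \textit{Realized-loss dependence} holds because $\phi^w(P,\ell)=w\,\ell(P)$ depends only on losses along $P$.
\item \textit{Scale invariance} is immediate.
\item \textit{Pairwise collusion-proofness} holds because, at any equilibrium path $P$, a deviation to $P'$ changes the joint liability of $i$ and $j$ by $(w_i+w_j)(\ell(P')-\ell(P))$. Once equilibrium paths are known to be efficient, this change is nonnegative.
\end{itemize}
For \textit{efficient implementation}, I would argue by backward induction. Every agent with $\lvert N^+_i\rvert>1$ has $w_i>0$, so at any history she strictly prefers successors $j$ minimizing $\ell(ij)+L_j$ and is indifferent among them. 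Agents with a single successor have no choice. Hence every subgame-perfect path is built from minimizing choices. By dynamic consistency, such a path is efficient. Conversely, any efficient path is supported by a strategy profile that follows it on path and picks minimizers off path. Thus $\mathcal{S}(\ell;\phi^w)=\mathcal{E}(\ell)$.

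\emph{Necessity, step 1.} By Corollary~\ref{CO:total}, $\phi(P,\ell)$ depends only on the number $\ell(P)$. So there is a function $g\colon\mathbb{R}_{\geq0}\to\mathbb{R}^N_{\geq0}$ with $\phi(P,\ell)=g(\ell(P))$. This needs every value $x\geq0$ to be attained on every path, which holds by putting all loss $x$ on one edge of $P$.

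\emph{Necessity, step 2.} \textit{Scale invariance} gives $g(\alpha x)=\alpha g(x)$ for $\alpha>0$. Set $w\equiv g(1)$; then $g(x)=w\,x$ for $x>0$.
\begin{itemize}
\item For $x=0$: scale invariance with $\alpha=2$ gives $g(0)=2g(0)$, so $g(0)=0$, which matches $w\cdot0$.
\item Balance gives $\sum_i w_i=1$, and nonnegativity of the rule gives $w\geq0$. Hence $w\in\Delta^N$ and $\phi=\phi^w$.
\end{itemize}

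\emph{Necessity, step 3 (the main obstacle).} It remains to show $w_i>0$ whenever $\lvert N^+_i\rvert>1$. Suppose instead $w_i=0$ for such an $i$, and let $j\neq k$ be two of her successors. Build a loss function as follows:
\begin{itemize}
\item all edges on some fixed path $Q$ through $s\to\dots\to i\to j\to\dots$ have loss $0$;
\item edge $ik$ has loss $1$;
\item all other edges have loss $M$ large, say $M=2$.
\end{itemize}
Then $Q$ is the unique efficient path, with loss $0$. Consider the profile in which everyone follows $Q$ except that $i$ picks $k$, and all agents after $k$ play subgame-perfectly. Agent $i$ receives $w_i\cdot(\text{loss})=0$ on every path, so she is indifferent and choosing $k$ is optimal for her. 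All other agents with positive weight minimize the continuation loss, so the resulting path is inefficient with loss at least $1$.

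The argument then needs this deviation profile to be a subgame-perfect equilibrium, which contradicts \textit{efficient implementation}. Two points must be checked. First, $i$ must actually be reached: agents before $i$ move along $Q$, whose continuation loss $0$ cannot be beaten by any alternative when $M>0$. Second, after $i$ picks $k$, the continuation must be well defined, which the subgame-perfect play after $k$ provides. Since $i$ is reachable from $s$, such a $Q$ exists, and choosing $M$ positive closes the argument.
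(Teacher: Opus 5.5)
\textbf{Gap in step~3, at the point you identify as the main obstacle.} Your sufficiency check is fine. Your necessity steps 1--2 match the paper's argument (Corollary~\ref{CO:total} plus \textit{scale invariance}). Your explicit handling of $\ell(P)=0$ is, if anything, more careful than the paper's, which applies scale invariance with $\alpha$ possibly equal to $0$.

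The failing step is the claim that agents before $i$ stay on $Q$ because its ``continuation loss $0$ cannot be beaten.'' Under your profile this is false. Once $i$ plays $k$, staying on $Q$ yields total loss $1+(\text{loss after }k)$. Every edge out of $k$ costs $M$, so this is at least $1+M$ whenever $k$ is neither a sink nor on $Q$. A positive-weight predecessor with an $M$-cost edge into a sink, or a shortcut onto a later node of $Q$, pays only about $M$ and leaves $Q$. Concretely, take edges $s\to i$, $s\to t$, $i\to j$, $i\to k$, $j\to t$, $k\to t$ (no bottlenecks), with $Q=(s,i,j,t)$ and $M=2$. Your profile gives loss $3$, while the source gets $2$ by choosing $t$. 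So if $w_s>0$, your profile is not subgame perfect.

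\textbf{How to repair it.} Your construction already works; you just do not need $i$ to be reached. Every path other than $Q$ leaves $Q$ through an edge outside $E_Q$, which costs $1$ or $M$, so $\mathcal{E}(\ell)=\{Q\}$. Since $\phi_i\equiv 0$, backward induction yields a subgame-perfect equilibrium in which $i$ chooses $k$ at the history $s\to\dots\to i$ along $Q$. That equilibrium's outcome cannot be $Q$, so it is inefficient, contradicting \textit{efficient implementation}. In the example above, this equilibrium ends at $s\to t$. Alternatively, give zero loss to some continuation from $k$ to a sink and take $M>1$; then your original profile really is an equilibrium.

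\textbf{Comparison with the paper.} The paper's own sketch starts from Lemma~\ref{LE:irreducible}'s all-efficient $\hat\ell$ and raises $i$'s other outgoing edges. It leaves the same reachability point open, and there it genuinely bites. In the example above with $w_s>0$, every equilibrium outcome of that loss function is efficient, because the source simply avoids $i$. Once repaired, your unique-efficient-path construction is the more robust route.
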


\begin{proof}
    It is immediate that all fixed-weight rules satisfy the axioms, so we focus on the proof's other direction.
    
    Let $\phi$ satisfy \textit{efficient implementation}, \textit{realized-loss dependence}, \textit{pairwise collusion-proofness}, and \textit{scale invariance}.
    Fix a path $P^*\in\mathcal{P}$ and loss function $\ell^*\in\mathcal{L}$ with $\ell^*(P^*)>0$.
    Define weights $w$ such that, for each agent $i \in N$,
    \[
        w_i \equiv \frac{\phi_i(P^*,\ell^*)}{\ell^*(P^*)} \geq 0.
    \]
    This is non-negative as losses and liabilities are non-negative.
    By balance, $\sum_j \phi_j(P,\ell) = \ell(P)$ implies that $\sum_j w_j = 1$.
    Hence, $w \in \Delta^N$.
    
    Now, take an arbitrary path $P \in \mathcal{P}$ and loss function $\ell \in \mathcal{L}$.
    Let $\alpha \equiv \ell(P) / \ell^*(P^*) \geq 0$.
    By Corollary~\ref{CO:total}, as $\ell(P) = \alpha \cdot \ell^*(P^*)$, we have $\phi(P,\ell) = \phi(P^*, \alpha \cdot \ell^*)$.
    By \textit{scale invariance}, 
    \[
        \phi(P^*,\alpha\cdot\ell^*)
        = \alpha\cdot \phi(P^*,\ell^*)
        = \frac{\ell(P)}{\ell^*(P^*)} \cdot \phi(P^*,\ell^*) 
        = w\cdot \ell(P).
    \]
    That is, $\phi(P,\ell) = w \cdot \ell(P)$, so $\phi=\phi^w$.

    To complete the proof, we show that agents with multiple outgoing edges have positive weights.
    For contradiction, suppose not;
    say $w_i = 0$ for an agent $i \in N$ with multiple outgoing edges.
    Fix path $\tilde P \in \mathcal{P}$ and loss function $\tilde\ell \in \mathcal{L}$.
    As in Lemma~\ref{LE:irreducible}, construct a loss function $\hat\ell$ to match $\tilde\ell$ on $\tilde P$ such that $\mathcal{E}(\hat\ell) = \mathcal{\tilde P}$.
    Thereafter, increase the losses on all but one of $i$'s outgoing edges.
    Then $i$ has a unique efficient choice, yet with $w_i = 0$, $i$ is indifferent between all choices. 
    Hence, there would be inefficient equilibria, contradicting \textit{efficient implementation}.
    Therefore, $w_i > 0$ and we conclude that $w \in \Delta^N_\star$.
\end{proof}

\begin{theorem} 
    The Shapley value of the path-counting game $v$ equals~$w^*$. 
\end{theorem}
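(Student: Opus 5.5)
The plan is to prove the identity by linearity of the Shapley value, exactly along the lines sketched in Subsection~\ref{SUB:Shapley}. First I decompose the path-counting game. For each path $P \in \mathcal{P}$ let $v^P$ be the unanimity (simple) game on the player set $N^*$ with carrier $N^*_P$: $v^P(S) = 1$ if $N^*_P \subseteq S$ and $v^P(S)=0$ otherwise. Note that $N^*_P \neq \emptyset$ since every path contains the source $s \in N^*$. By the definition of $v$, for every coalition $S \subseteq N^*$,
\[
    v(S) = \frac{1}{\lvert\mathcal{P}\rvert}\sum_{P \in \mathcal{P}} v^P(S),
\]
because the right-hand side counts exactly the paths whose non-sink nodes all lie in~$S$. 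Distinct paths may share the same set $N^*_P$ (two paths differing only in their sink). This is harmless: such paths simply contribute the same unanimity game several times, and the sum still holds coalition by coalition.

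Second, I compute the Shapley value of each unanimity game $v^P$. The standard argument goes as follows. By the null-player axiom, every $i \notin N^*_P$ receives $0$, since adding $i$ never changes whether $N^*_P \subseteq S$. By symmetry, all members of $N^*_P$ are interchangeable in $v^P$ and so receive equal shares. By efficiency, these shares sum to $v^P(N^*) = 1$. Hence $\mathrm{Sh}_i(v^P) = 1/\lvert N^*_P\rvert$ for $i \in N^*_P$ and $0$ otherwise.

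Alternatively, one can argue via orderings: $i$ is pivotal in $v^P$ exactly when $i$ is the last member of $N^*_P$ to arrive, which happens with probability $1/\lvert N^*_P\rvert$.

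Third, I apply linearity (additivity together with homogeneity) of the Shapley value:
\[
    \mathrm{Sh}_i(v) = \frac{1}{\lvert\mathcal{P}\rvert}\sum_{P\in\mathcal{P}} \mathrm{Sh}_i(v^P) = \sum_{P \in \mathcal{P}} \frac{\mathbf{1}[i \in N^*_P]}{\lvert N^*_P\rvert \cdot \lvert\mathcal{P}\rvert} = \sum_{P\in\mathcal{P}} v_i(P) = w^*_i
\]
for every $i \in N^*$. To finish, I reconcile the player sets. The game $v$ is defined on $N^*$, while $w^*$ is a vector over $N$. Every sink $t$ has $v_t(P) = 0$ for all $P$, so $w^*_t = 0$. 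Thus $w^*$ restricted to $N^*$ equals the Shapley value of $v$, and the sinks' zero weights match the convention that they are not players.

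There is no real obstacle here. The only points needing care are the bookkeeping issue of repeated carriers in the decomposition and the identification of the player sets $N^*$ versus $N$; I would state both explicitly rather than leave them implicit.
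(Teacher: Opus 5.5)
Your proposal is correct and is essentially the paper's argument: the appendix proof applies the random-order formula and splits each marginal contribution path by path, so that $i$ ``completes'' $P$ with probability $1/\lvert N^*_P\rvert$. That per-path split is the same decomposition into unanimity games $v^P$ that you make explicit via linearity, and the sketch in Subsection~\ref{SUB:Shapley} follows your route. Your explicit handling of sinks (players $N^*$ versus weights on $N$) is a small improvement, since the appendix writes $i \notin N_P$ where $i \notin N^*_P$ is meant.
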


\begin{proof}
    Let $\varphi^* \in \mathbb{R}^N$ denote the Shapley value of the path-counting game~$v$.
    We use the random-permutation characterization of the Shapley value \citep[e.g.][]{Roth1988}.
    For a uniformly drawn permutation $\pi$ of $N^*$, the Shapley value of agent $i$ is
    \[
        \varphi^*_i =
            \mathbb{E}_\pi
            \left[
                v \left( \Pred_\pi(i) \cup \{i\} \right)
                -
                v \left (\Pred_\pi(i) \right)
            \right],
    \]
    where $\Pred_\pi(i) \subseteq N^* \setminus \{ i \}$ denotes the set of agents that appear before $i$ in the permutation~$\pi$.
    Next, take a path $P \in \mathcal{P}$.
    
    If $i \in N^*_P$, then $P$ contributes $1 / \lvert \mathcal{P} \rvert$ to $v(S \cup \{i\}) - v(S)$ when $N^*_P \setminus \{i\} \subseteq S = \Pred_\pi(i)$.
    In words, $i$ is the one who ``completes'' the path~$P$.
    This is when $i$ is last among the $\lvert N^*_P \rvert$ agents of $N^*_P$ in the random order $\pi$, which occurs with probability $1 / \lvert N^*_P \rvert$. 
    Thus, the expected contribution of $P$ to $\varphi^*_i$ is $1 / (\lvert N^*_P \rvert \cdot \lvert \mathcal{P} \rvert)$, which coincides with $v_i(P)$ in Subsection~\ref{SUB:weightalgo}.
    
    If $i \not \in N_P$, then $P$ never contributes to $v(S \cup \{i\}) - v(S)$, so the expected contribution is zero.
    This, again, coincides with $v_i(P)$.
    
    Summing over all paths, we have $\varphi^*_i = \sum_{P \in \mathcal{P}} v_i(P) = w^*_i$.
\end{proof}

\section{Independence of axioms} \label{APP:independence}
    
We present a series of rules below, where each satisfies all but one axiom of those imposed in Theorem~\ref{TH:fixedweight}.
Throughout, let $i$, $P$, and $\ell$ denote generic agents, paths, and losses.

\begin{itemize}
    \item Without \textit{efficient implementation}:
    Assign all liability to the source:
    $\phi^1_s(P,\ell) = \ell(P)$ and $\phi^1_i(P,\ell) = 0$ for each agent $i \neq s$.
    As all later agents will be indifferent no matter the path selected, there will be equilibrium paths that are inefficient.
    \item Without \textit{realized-loss dependence}:
    Set weights proportional to the maximum loss of outgoing edges (adding $1$ to ensure positive weights).
    For each agent $i$, let 
    \[ \textstyle
        m_i = 1 + \max_{j \in N^+_i} \ell(ij).
    \]
    Finally, set $w_i = m_i / \sum_j m_j$, and $\phi^2_i(P,\ell) = w_i \cdot \ell(P)$.
    \item Without \textit{pairwise collusion-proofness}:
    Let $m \equiv \max_\mathcal{P} \left\vert{N_P}\right\vert \geq 2$ be the length of the longest path by node count. 
    Each on-path agent gets share $\alpha = 1/m$ whereas the remaining losses are shared equally off-path.
    That is, $\phi^3_i(P,\ell) = \alpha \cdot \ell(P)$ for $i \in N_P$ and $\phi^3_i(P,\ell) = (1- \left\vert{N_P}\right\vert \alpha) / (n - \left\vert{N_P}\right\vert) \cdot \ell(P)$ for $i \not \in N_P$.
    \item Without \textit{scale invariance}:
    Use a weight function $w \colon \mathbb{R}_{\geq 0} \to \Delta^N_*$ that sets weights $w(T)$ depending on total losses $T = \ell(P)$.
    For instance, let $w_s(T) = 1 / \sqrt{T + 1}$ and $w_j(T) = (1 - w_s(T)) / (n-1)$ for agents $j \neq s$.
    Set $\phi^5_i(\ell,P) = w_i(\ell(P)) \cdot \ell(P)$.
    
    Note that $\phi^5_s(\ell,P) = \ell(P) / \sqrt{\ell(P) + 1}$ is increasing in $\ell(P)$;
    that is, even though the source's weight decreases with a larger loss total, the source's liability increases.
    Hence, all agents prefer cheaper paths, so \textit{efficient implementation} is still satisfied.
\end{itemize}

\section{Additional results} \label{APP:additional}

    \begin{observation}[Impossibility of ``on-path only'' rules] \label{OBS:pathonly}
       There are graphs for which no rule satisfies \textit{efficient implementation}, \textit{realized-loss dependence}, and, for each path $P \in \mathcal{P}$, loss function $\ell \in \mathcal{L}$, and agent~$i \in N$,
        \[
            \phi_i(P, \ell) > 0 
            \, \implies \,
            i \in N_P.
        \]
    \end{observation}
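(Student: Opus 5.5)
The plan is to exhibit one explicit graph satisfying Assumption~\ref{A2} and to show that the three requirements cannot hold together on it. Fix a rule $\phi$ that satisfies \textit{efficient implementation} and \textit{realized-loss dependence} and assigns zero liability to every agent off the canceled path. Proposition~\ref{PR:eff} (downstream monotonicity) then applies at every decision node. Combined with \textit{realized-loss dependence}, it turns the game-theoretic requirement into purely functional constraints.

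\emph{Step 1 (functional constraints).} For each path $P$, \textit{realized-loss dependence} makes $\phi(P,\ell)$ a function of the vector $(\ell(e))_{e\in E_P}$ alone. For each decision maker $i$, each history $Q\in\mathcal{P}^-_i$, and two continuations through $j,k\in N^+_i$, Proposition~\ref{PR:eff} says the sign of $\phi_i(Q\oplus ij\oplus P_j,\ell)-\phi_i(Q\oplus ik\oplus P_k,\ell)$ equals the sign of the difference in total losses. The efficiency hypothesis on $P_j,P_k$ can always be met by making edges outside $E_{\olsi{P}_j}\cup E_{\olsi{P}_k}$ very expensive, as in the proof of Proposition~\ref{PR:eff}. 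So every comparison between two paths that branch at $i$ is available for arbitrary on-path losses. In particular, $\phi_i$ restricted to paths branching at $i$ after a common history is a strictly increasing function of total loss, and it takes equal values at equal totals.

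\emph{Step 2 (graph and chain of comparisons).} I would use a graph in which the on-path-only condition collides with these constraints, so that an agent's liability on some path is forced to be positive while that agent is off the path. The natural candidate is a ``crossing'' structure, a variant of Figure~\ref{FIG:solidarity} or the grid of Figure~\ref{FIG:grid}, with paths of unequal lengths. The aim is a chain of comparisons $P_1\to P_2\to\dots\to P_r$, each pair branching at some agent and each linked to the next by \textit{realized-loss dependence} (paths sharing edges, with losses moved onto the shared edges). Along such a chain I would try to prove that some agent $x$ has $\phi_x(P_r,\ell)>0$ with $x\notin N_{P_r}$, or alternatively a cyclic strict inequality $\phi_x(P,\ell)<\phi_x(P,\ell)$.

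\emph{Main obstacle.} Step 2 is where I expect the real difficulty. A quick check suggests that the naive candidates do not produce a contradiction. If all paths have equal length, each on-path non-sink agent can pay $T/|N_P|$ with $T=\ell(P)$. On graphs like Figure~\ref{FIG:solidarity}, a small fixed share $\varepsilon T$ to each on-path decision maker, with the remainder absorbed by the (always on-path) sink, appears to satisfy Step~1. So the counterexample must block the sink from absorbing the residual. It should also force a decision maker's strictly increasing liability function, pinned down through one branching comparison, to be transported via \textit{realized-loss dependence} onto a path that avoids that agent. I would first search small DAGs with a path through a node $x$ whose losses coincide on every edge with those of a path avoiding $x$. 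The hope is that \textit{realized-loss dependence} then equates the two liability vectors, which the on-path-only condition forbids unless $x$'s required positive liability vanishes.
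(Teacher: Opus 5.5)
Your proposal has a genuine gap: Step~2 is a search plan rather than an argument, so it never produces a graph or a contradiction. The direction you propose for that search also cannot succeed. Every path ends at a sink, and that sink lies in $N_P$, so no graph can stop the sink from absorbing the residual. Your own $\varepsilon$-rule (each agent in $N^*_P$ pays $\ell(P)/n$, the terminal sink of $P$ pays the rest) satisfies all three conditions on \emph{every} graph:
\begin{itemize}
    \item it is nonnegative and balanced, since $|N^*_P|\le n-1$;
    \item it depends only on $\ell(P)$, so \textit{realized-loss dependence} holds;
    \item it charges only agents in $N_P$;
    \item every mover lies on every path of its subgame and pays $1/n$ of that path's loss, so all movers minimize continuation loss and $\mathcal{S}(\ell;\phi)=\mathcal{E}(\ell)$.
\end{itemize}
So with $N_P$ read literally, including sinks, the statement is false. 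The missing idea is that sinks must be liability-free, that is, the condition should read $\phi_i(P,\ell)>0 \Rightarrow i\in N^*_P$. The paper's proof uses exactly this reading when it concludes from balance that $\phi_s(P,\ell)=\ell(P)$ on the single-edge path $P=(s\to t)$, which presupposes that the sink $t$ pays nothing.

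Under that reading a five-edge graph suffices, and the mechanism differs from the coincident-losses idea you sketch. Take edges $st,si,ij,it,jt$ and paths $P=(s\to t)$, $P'=(s\to i\to t)$, $P''=(s\to i\to j\to t)$. Set $\ell(st)=\ell(it)=\ell(jt)=1$ and $\ell(si)=\ell(ij)=0$, so all three paths cost $1$.
\begin{itemize}
    \item On $P$, only $s$ is a non-sink, so $\phi_s(P,\ell)=1$.
    \item The ties at $s$ and at $i$ force, via Proposition~\ref{PR:eff}, $\phi_s(P',\ell)=\phi_s(P,\ell)=1$ and $\phi_i(P'',\ell)=\phi_i(P',\ell)$.
    \item Balance over the non-sinks of $P'$ then gives $\phi_i(P',\ell)=0$, hence $\phi_i(P'',\ell)=0$.
    \item Now set $\ell'(it)=0$ and leave everything else unchanged. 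Then $P'$ is uniquely efficient.
    \item Since $it\notin E_{P''}$, \textit{realized-loss dependence} gives $\phi_i(P'',\ell')=0$.
    \item Proposition~\ref{PR:eff} at $i$ would then require $\phi_i(P',\ell')<0$, which is impossible.
\end{itemize}
The key step is pinning an on-path decision maker's liability to zero at a tie. This is forced because the source alone must carry the direct path's loss and must be indifferent across tied paths. \textit{Realized-loss dependence} then freezes that zero liability when an edge off $P''$ becomes cheap.
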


    \begin{proof}
        Consider the graph in Figure~\ref{FIG:only_on_the_path}.
        Denote its paths $P = (s \to t)$, $P' = (s \to i \to t)$, and $P'' = (s \to i \to j \to t)$.
        First, let the loss function $\ell \in \mathcal{L}$ be such that $\ell(st) = \ell(it) = \ell(jt) = 1$ and $\ell(si) = \ell(ij) = 0$.
        Hence, all paths are efficient.
        As the source is the only non-sink agent on~$P$, by balance and \textit{path-only liabilities}, $\phi_s(P, \ell) = 1$.
        By \textit{efficient implementation}, the source must be indifferent between choosing~$i$ or~$t$. 
        That is, the source’s liability should be the same for $P$ and $P'$:
        $\phi_s(P', \ell) = \phi_s(P, \ell) = 1$.
        But then, by balance, $\phi_i(P', \ell) = 0$.
        Similarly, agent $i$ must be indifferent between $j$ and~$t$, so $\phi_i(P'', \ell) = \phi_i(P', \ell) = 0$.
    
        \begin{figure}[!htb]
        \centering
        \begin{tikzpicture}
            \node at (0,0) (s) {$s$};
            \node at (0.5,1) (i) {$i$};
            \node at (1,2) (j) {$j$};
            \node at (4,0) (t) {$t$};
            \draw [->] (s) -- (t); %
            \draw [->] (s) -- (i); %
            \draw [->] (i) -- (j); %
            \draw [->] (i) -- (t); %
            \draw [->] (j) -- (t); %
        \end{tikzpicture}
        \caption{Graph for Observation~\ref{OBS:pathonly}.}
        \label{FIG:only_on_the_path}
    \end{figure}

    Next, define the loss function $\ell' \in \mathcal{L}$ that matches $\ell$ except for $\ell'(it) = 0$.
    Under $\ell'$, path $P'$ is the unique efficient path.
    Yet, as losses on $P''$ are unchanged, by \textit{realized-loss dependence}, $\phi_i(P'', \ell') = \phi_i(P'', \ell) = 0 \leq \phi_i(P', \ell')$.
    That is to say, $i$ does not strictly prefer the efficient $P'$ to the inefficient $P''$, contradicting \textit{efficient implementation} at~$\ell'$.
    \end{proof}

\end{appendix}

\end{document}